\newtheorem{thm}{Theorem}[section]
\theoremstyle{definition}
\title{Assessing the Effects of Treatment in HIV-TB Co-infection Model}
\author[S. Kumar]{Sachin Kumar}
\author[S. Jain]{Shikha Jain$^\ast$} \thanks{$^\ast$Corresponding Author}
\address{Sachin Kumar,
Assistant Professor,
Department of Mathematics,
University of Delhi,
New Delhi-110007,
India.}
\email{sachinambariya@gmail.com}
\address{Shikha Jain, 
Department of Mathematics,
University of Delhi,
Delhi-110007,
India.}
\email{shikhajain01051990@gmail.com}
\subjclass[2010]{Primary 92D30; Secondary 34C60, 34D20}
\keywords{Tuberculosis, HIV,  Reproduction Number, Co-infection, Stability, Treatment}
\numberwithin{equation}{section}
\begin{document}

\begin{abstract} We propose a population model for HIV-TB co-infection dynamics by considering treatments for HIV infection, active tuberculosis and co-infection. The HIV only and TB only models are analyzed separately, as well as full model. The basic reproduction numbers for TB ($\mathcal{R}_0^T$) and HIV ($\mathcal{R}_0^H$) and overall reproduction number for the system $\mathcal{R}_0= \max\{\mathcal{R}_0^T, \mathcal{R}_0^H\}$ are computed. The equilibria and their stability are studied. The main model undergoes supercritical transcritical bifurcation at $\mathcal{R}_0^T=1$ and $\mathcal{R}_0^H=1$ whereas the parameters $\beta^*=\beta e$ and $\lambda^*=\lambda \sigma$ act as bifurcation parameters, respectively. Numerical simulation claims the existence of interior equilibrium when both the reproduction numbers are greater than unity. We explore the effect of early and late HIV treatment on disease-induced deaths during the TB treatment course. Mathematical analysis of our model shows that successful disease eradication requires treatment of single disease, that is, treatment for HIV only and TB only infected individuals with addition to co-infection treatment and in absence of which disease eradication is extremely difficult even for $\mathcal{R}_0<1$. When both the diseases are epidemic, the treatment for TB only infected individuals is very effective in reducing the total infected population and disease-induced deaths in comparison to the treatment for HIV infected individuals while these are minimum when both the single disease treatments are given with co-infection treatment.

\end{abstract}

\maketitle

\section{\textbf{Introduction}}
According to WHO, Tuberculosis is one of the top 10 causes of death worldwide \cite{whotb, TB}. Tuberculosis (TB) is a bacterial disease which is primarily caused by the bacteria \emph{Mycobacterium Tuberculosis} and is usually acquired by inhaling TB bacteria from surrounding air. All the infected people are not equally infectious and generally, it is only people with TB of the throat or lungs who are infectious. The bacteria get released in air by a carrier with active TB through coughing, sneezing or talking. Most infections do not have symptoms, in that case it is known as latent tuberculosis and people with latent TB do not spread the disease. Inhaling only a few of these germs are sufficient to get infected. \textit{About one quarter of the world's population is infected with TB} \cite{TBt}, while most are in latent phase.\par
HIV, the Human Immunodeficiency Virus infects cells of immune system, destroying their function. HIV infects vital cells in the human immune system such as helper T cells (specifically CD4+ T cells), macrophages, and dendritic cells. As it hijacks the T cells that help keep the immune system working, HIV is particularly devastating to immune health. In the process of replication, the virus destroys increasing numbers of T cells. The T cells of an important part of the immune system are annihilated, leaving the body open to opportunistic infections. The immune system is thus deteriorated and no longer fulfils its role of fighting infections and diseases. Acquired immunodeficiency syndrome (AIDS) is a term for the most advanced stages of HIV infection. HIV can be transmitted through unprotected sexual intercourse, and oral sex with an infected person, transfusion of contaminated blood and sharing of contaminated needles, syringes, surgical equipment or other sharp instruments. It may also be transmitted between a mother and her infant during pregnancy, childbirth and breastfeeding. In 2015, an estimated 44\% of new infections occurred among key populations and their partners \cite{HIVf}.

TB is a leading killer of HIV-positive people: in $2015$, 35\% of HIV deaths were due to TB \cite{TBt}. Lowered immunity due to HIV infection increases the susceptibility to TB infection. People infected with HIV are 20 to 30 times more likely to develop active TB disease than the uninfected ones. TB is a treatable and curable disease but it is important to complete the entire course of medications even after one feels well. Between 2000 to 2015, an estimated 49 million lives were saved through TB diagnosis and treatment \cite{TBt}. Though HIV infection has no permanent treatment, ART (Antiretroviral Therapy) can slow down the progression of HIV in the body to near a halt. ART reduces the risk of TB morbidity and mortality among people living with HIV. When ART is combined with TB preventive therapy, it can have a significant impact on TB prevention. Since TB can be cured effectively with treatment and complete treatment course being short, the usual recommendation is to start it immediately. The DOTS strategy makes no distinction between settings with different levels of HIV infection, yet outcomes will inevitably differ according to the epidemiology of HIV infection \cite{DOTS}. Initiating ART soon after the beginning of TB treatment increases the risk of IRIS (Immune Reconstruction Inflammatory Syndrome) which worsens TB infection and causes severe medical complications, while its delay until completion of TB treatment course increases the risk of death due to HIV. Therefore, it is difficult to identify the correct initiation of ART with TB treatment.

The negative impact of synergic interactions between TB and HIV have caused worldwide concern. Mathematical modelling of HIV, TB and HIV/TB co-infections have been reported by several researchers. Guzzetta et. al \cite{Guzzetta} proposed an age-structured, socio-demographic individual based model (IBM) with a realistic, time-evolving structure of preferential contacts in a population. Trauer et. al \cite{Trauer}  presented a mathematical model to simulate tuberculosis (TB) transmission in highly endemic regions of the Asia-Pacific, where epidemiology does not appear to be primarily driven by HIV-coinfection. Long et. al \cite{Elisa} proposed a co-epidemical model for HIV-TB infection and presented an analysis in the population of India. Roeger et. al \cite{FengChavez} proposed an 8 compartmental model of HIV-TB co-infection in which qualitative analysis of the model has been done. They discussed the stability and disease prevalence in the model. Silva et. al \cite{Silva} proposed a population model for HIV-TB/AIDS co-infection transmission dynamics, which considers antiretroviral therapy for HIV infection and treatments for latent and active tuberculosis. Bhunu et. al \cite{Bhunnu} developed a model that in-corporates all aspects of TB transmission dynamics as well as aspects of HIV transmission dynamics to come with a distinct detailed co-infection model for HIV and TB. Naresh et. al \cite{NT} developed a HIV/TB co-epidemic model assuming that AIDS cases are non- infectious and did not include all stages of HIV and TB infection. Gakkhar and Chavda \cite{Gakkhar} formulated a simple epidemic HIV-TB co-infection model. Kaur et. al \cite{Kaur} developed a deterministic non-linear HIV-TB co-infection model which discusses the role of screening and treatment in the transmission dynamics of HIV/AIDS and tuberculosis co-infection. Mallela et. al \cite{mallela} developed an eight compartmental model and studied the effect of HIV treatment in different phases of TB treatment.
 
 Work done by Mallela et. al is the motivation for the present paper. In the present paper necessity of single disease infection treatments that is treatment for TB only and HIV only patients is also studied with the HIV treatment during different phases of TB treatment.
 
  The paper is organised as follows: in Section \ref{Section2}, a twelve compartmental model for HIV-TB co-infection and treatment has been developed and positivity and boundedness of the solutions is proved. In Sections \ref{Section3} and \ref{Section4}, TB and HIV sub-models are analyzed, respectively, and the respective reproduction numbers are calculated. The existence and stability conditions of equilibria are also deduced. In Section \ref{Section5}, the main model is discussed with its reproduction number and stability of equilibria. In Section \ref{Section6}, numerical computations of the model are performed to explore the HIV-TB co-infection dynamics. The effect of reproduction number on the infected population is studied. The effect of early or late initiation of HIV treatment during TB treatment course and the necessity of single disease infection treatment are discussed. We summarize our results with conclusion in Section \ref{Section7}.

\section{\textbf{Model formulation and basic properties}}\label{Section2}
The model subdivides the human population into twelve mutually-exclusive compartments, namely susceptible individuals ($S$), TB-latently infected individuals ($T_L$), TB-infected individuals who are infectious and have active TB ($T_I$), TB-infected individuals who are under treatment for TB ($T_T$), HIV infected individuals ($H$), HIV infected individuals co-infected with latent TB ($H_L$), HIV infected individuals under treatment for HIV infection ($H^T$), co-infected Individuals with active TB ($C$), co-infected individuals with active TB under early phase of treatment for TB ($C_1$), co-infected individuals with active TB under late phase of treatment for TB ($C_2$), co-infected individuals with active TB under early phase of TB treatment as well as going through ART ($C_1^T$) and co-infected individuals with active TB under late phase of TB treatment along with ART ($C_2^T$). The total population at time t, denoted by $N(t)$, is given by 
\begin{align*}
N(t)=& S(t)+T_L(t)+T_I(t)+H(t)+H_L(t)+H^T(t)+C(t)\\ &+C_1(t)+C_2(t)+C_1^T(t)+C_2^T(t)+T_T(t).
\end{align*}
\tikzstyle{startstop}=[rectangle, rounded corners]
\tikzstyle{block}=[draw,fill=pink!50,text width=5em,text centered,node distance=9em,minimum height=15mm]
\tikzstyle{arrow}=[thick,->,>=stealth]

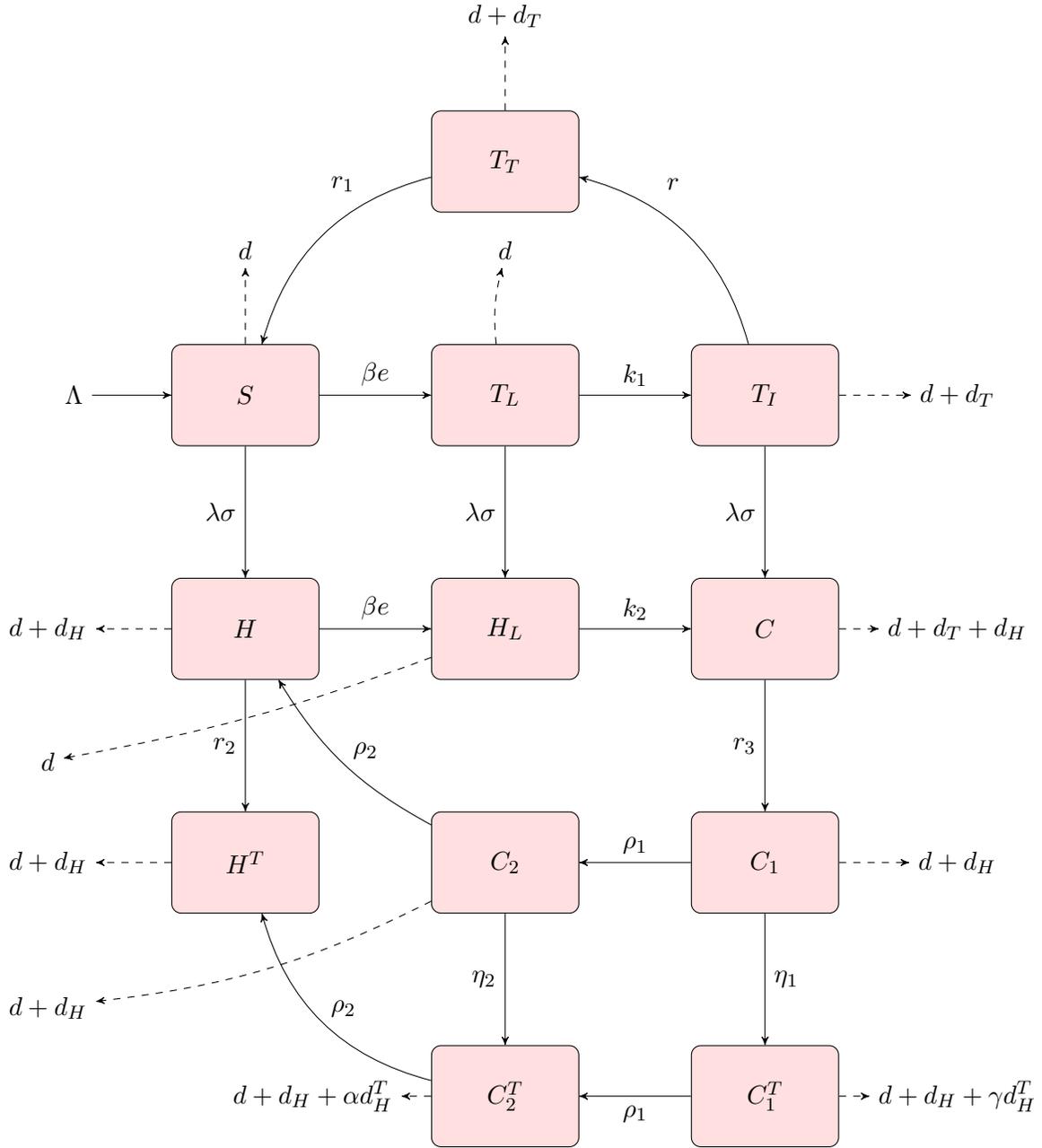
\begin{figure}
\begin{center}

\begin{tikzpicture}[->,>=stealth']

\node[block,rounded corners](TT) {$T_T$};
\node[block,rounded corners,below of=TT,xshift=-10em](S) {$S$};
\node[block,rounded corners,below of=TT,xshift=0em](TL) {$T_L$};
\node[block,rounded corners,below of=TT,xshift=10em](TI) {$T_I$};
\node[block,rounded corners,below of=S](H) {$H$};
\node[block,rounded corners,below of=TL](HL) {$H_L$};
\node[block,rounded corners,below of=TI](C) {$C$};
\node[block,rounded corners,below of=C,xshift=-10em](C2) {$C_2$};
\node[block,rounded corners,below of=C](C1) {$C_1$};
\node[block,rounded corners,below of=C2](C2T) {$C_2^T$};
\node[block,rounded corners,below of=C1](C1T) {$C_1^T$};
\node[block,rounded corners,below of=H](HT) {$H^T$};

\node [text centered,left of=S,xshift=-4em] (Lambda) {$\Lambda$};
\node [text centered,above of=S,yshift=3em] (d) {$d$};
\node [text centered,above of=TL,yshift=3em] (d1) {$d$};
\node [text centered,left of=TI,xshift=10em] (d+dT){$d+d_T$};
\node [text centered,left of=C,xshift=10em] (d+dT+dH){$d+d_T+d_H$};
\node [text centered,above of=TT,yshift=3em] (d2) {$d+d_T$};
\node [text centered,left of=H,xshift=-5em] (d+dH){$d+d_H$};
\node [text centered,below of=d+dH,yshift=-2.5em] (d3){$d$};
\node [text centered,left of=HT,xshift=-5em] (d4){$d+d_H$};
\node [text centered,below of=d4,yshift=-3em] (d5){$d+d_H$};
\node [text centered,left of=C1,xshift=10em] (d6){$d+d_H$};
\node [text centered,left of=C1T,xshift=10em] (d7){$d+d_H+ \gamma d_H^T$};
\node [text centered,right of=C2T,xshift=-10em] (d8){$d+d_H+ \alpha d_H^T$};
\path (Lambda) edge (S);
\path (S) edge[dashed] (d);
\path (TL) edge[bend left=10, dashed] (d1);
\path (TI) edge[dashed] (d+dT);
\path (C) edge[dashed] (d+dT+dH);
\path (TT) edge[dashed] (d2);
\path (H) edge[dashed] (d+dH);
\path (HL) edge[bend left=5,dashed] (d3);
\path (HT) edge[dashed] (d4);
\path (C2) edge[bend left=10, dashed] (d5);
\path (C1) edge[dashed] (d6);
\path (C1T) edge[dashed] (d7);
\path (C2T) edge[dashed] (d8);
\path(S)edge node[anchor=south]{$\beta e$} (TL);
\path(S)edge node[left] {$\lambda \sigma$}(H);
\path(TL)edge node[anchor=south]{$k_1$}(TI);
\path(H)edge node[left] {$r_2$}(HT);
\path(TL)edge node[left] {$\lambda \sigma$}(HL);
\path(TI)edge node[left] {$\lambda \sigma$}(C);
\path(HL)edge node[anchor=south]{$k_2$}(C);
\path(H)edge node[anchor=south]{$\beta e$}(HL);
\path(C1)edge node[anchor=south]{$\rho_1$}(C2);
\path(C1T)edge node[anchor=north]{$\rho_1$}(C2T);
\path(C)edge node[left] {$r_3$}(C1);
\path(C1)edge node[right] {$\eta_1$}(C1T);
\path(C2)edge node[left] {$\eta_2$}(C2T);
\path(TI)edge[bend right=30] node[left,yshift=2em] {$r$}(TT);
\path(TT)edge[bend right=30] node[right,yshift=2em] {$r_1$}(S);
\path(C2)edge[bend left=15] node[right,yshift=0.5em] {$\rho_2$}(H);
\path(C2T)edge[bend left=30] node[right,yshift=0.5em] {$\rho_2$}(HT);
\end{tikzpicture}
\end{center}
\caption{Schematic Diagram}\label{fig:1}
\end{figure}
We assume that all individuals in a given compartment are identically infectious, which might ignore potential effects caused due to variation among individuals. The susceptible cannot get HIV and TB infection simultaneously that means there is no direct transmission from the class of susceptible to the class of individuals co-infected with HIV and TB. The susceptible population is increased by the constant recruitment rate $\Lambda$ which is assumed to simplify the model. All individuals in different compartments suffer from natural death rate $d$ while $d_T$ and $d_H$ are disease induced death rates due to TB and HIV separately. We ignore the temporal immunity to recover from latent TB because nowdays modern lifestyle has lowered the immunity \cite{immunity} and consider direct and endogenous reinfection only. Early initiation of ART increases the probability of developing IRIS. Since initiation of ART with TB treatment alters the disease induced death rate, therefore we consider it to be $d_H^T$ and $\gamma$ is the probability of developing IRIS during the early phase of co-treatment for compartment $C_1^T$ and $\alpha=\frac{\gamma \rho_1}{\rho_1+\rho_2}$ during late phase of co-treatment for compartment $C_2^T$ \cite{mallela}. Hence, the rate of IRIS development in $C_2^T$ compartment decreases as $1/\rho_1$ increases. We assume $\sigma$ and $e$ are per capita contact rates for HIV and TB respectively. We assume $\beta$ to be the probability of TB infection per contact with a person with active TB and $\lambda$ is the probability of HIV infection per contact with a HIV infectious person. We assume that people under treatment for any disease are not infectious for spreading that disease since they are aware of their illness, so are precautious to the spread of disease. Secondly, treatment of TB reduces its infectiousness rapidly \cite{TBinf}. In this model, we do not consider the treatment of latent TB and only sexual transmission of HIV is considered. The force of infection $\lambda_T$ associated with TB is given by
\begin{align}\label{eq:lt}
\lambda_T = \beta e \frac{(T_I + C)}{N}.
\end{align}

The force of infection $\lambda_H$ associated with HIV is given by
\begin{align}\label{eq:lh}
\lambda_H = \lambda \sigma \frac{(H + H_L + C + C_1 + C_2)}{N}.
\end{align}

 We further assume that co-infected individuals under TB-treatment die of HIV or IRIS only. We assume $k_1$ is the progression rate from latent to active TB with no HIV while $k_2$ is the progression rate from latent to active TB with HIV, $r$ is the per capita TB treatment rate with no HIV and $r_1$ is the recovery rate by treatment from TB with no HIV. The people successfully treated with TB return to class of susceptible since TB can reoccur\cite{TBt}. Let $r_2$ be the per capita HIV treatment rate with no TB while $r_3$ is the per-capita TB treatment rate in co-infected individuals. We assume $\rho_1$ and $\rho_2$ to be the transition rates of TB treatment from early phase to late phase and from late phase to completion phase respectively while $\eta_1$ and $\eta_2$ are the rates at which HIV treatment begins during early phase of TB treatment and late phase of TB treatment respectively. The assumptions result in the following differential equations that describe the interaction of the two disease model as:
 \begin{align}\label{eq:main} 
\frac{dS}{dt}&= \Lambda - \beta e S \frac{(T_I + C)}{N} - dS - \lambda \sigma S \frac{(H + H_L + C + C_1 + C_2)}{N} + r_1 T_T,\nonumber \\
\frac{dT_L}{dt}&= \beta e S \frac{(T_I + C)}{N} -(d+k_1) T_L -  \lambda \sigma T_L \frac{(H + H_L + C + C_1 + C_2)}{N},\nonumber\\
\frac{dT_I}{dt}&= k_1 T_L -(d+d_T)T_I -  \lambda \sigma T_I \frac{(H + H_L + C + C_1 + C_2)}{N}-r T_I,\nonumber \\
\frac{dH}{dt} &= \lambda \sigma S \frac{(H + H_L + C + C_1 + C_2)}{N} - r_2 H -(d+d_H)H - \beta e H \frac{(T_I+C)}{N} + \rho_2 C_2,\nonumber\\
\frac{d H_L}{dt}&=  \lambda \sigma T_L \frac{(H + H_L + C + C_1 + C_2)}{N} + \beta e H \frac{(T_I+C)}{N}-(k_2+d+d_H)H_L,\nonumber\\
\frac{dC}{dt}&= k_2 H_L +  \lambda \sigma T_I \frac{(H + H_L + C + C_1 + C_2)}{N}-(d+d_T+d_H)C - r_3C, \\
\frac{dC_1}{dt}&= r_3C-\rho_1C_1-(d+d_H)C_1-\eta_1C_1,\nonumber \\
\frac{dC_2}{dt}&=\rho_1C_1-\rho_2C_2-(d+d_H)C_2-\eta_2C_2, \nonumber\\
\frac{dC_1^T}{dt}&=\eta_1C_1-\rho_1C_1^T-(d+d_H+ \gamma d_H^T)C_1^T,\nonumber \\
\frac{dC_2^T}{dt}&=\eta_2C_2+\rho_1C_1^T-(d+ d_H + \alpha d_H^T)C_2^T,  \nonumber\\
\frac{dH^T}{dt}&=\rho_2C_2^T+r_2H-(d+d_H)H^T, \nonumber\\
\frac{dT_T}{dt}&= rT_I - r_1T_T-(d+d_T)T_T.\nonumber
\end{align}
The model flow diagram is shown in figure \ref{fig:1}. The solid arrows show the flow within the system while dashed arrows show flow out from the system.
The non-negative initial conditions are chosen.

\subsection{\textbf{Positivity and boundedness of solutions}}
Since the system deals with human population which can not be negative, we need to show that all the variables are always non-negative as well as the solutions of system \eqref{eq:main} remain positive always with positive initial conditions in the bounded region defined by
\begin{align*}
 D =\left\{(S, T_L, T_I, H, H_L, C, C_1, C_2, C_1^T, C_2^T, H^T, T_T)\in\mathbb{R}_+^{12}: N(t)\leqslant\frac{\Lambda}{d}\right\}.
\end{align*}
The next result is very obvious and follows from direct algebraic and differential calculations. 
\begin{thm}\label{th1}
For the non-negative initial conditions, the solutions $S,\; T_L,\; T_I,\; H,\; H_L,\; C,\; C_1,\\ C_2,\; C_1^T,\; C_2^T,\; H^T,\; T_T$ of the system \eqref{eq:main} are positive when $t\geqslant0$ and the region D is positively invariant.
\end{thm}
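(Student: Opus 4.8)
The plan is to separate the statement into two independent claims---positivity of every coordinate in $\mathbb{R}_+^{12}$, and the a priori bound $N(t)\le\Lambda/d$---and to prove each by an elementary differential-inequality argument. For positivity, the structural feature to exploit is that the right-hand side of \eqref{eq:main} is \emph{quasi-positive}: in each equation every loss term is proportional to the very compartment being differentiated, while every remaining term is a product of non-negative parameters with other compartment variables. Consequently, on the face of the orthant where one coordinate vanishes its derivative is non-negative; for example $\dot S=\Lambda+r_1T_T\ge0$ on $S=0$, $\dot T_L=\beta e\,S(T_I+C)/N\ge0$ on $T_L=0$, and $\dot C=k_2H_L+\lambda\sigma\,T_I(H+H_L+C+C_1+C_2)/N\ge0$ on $C=0$, with the remaining nine faces handled identically. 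Equivalently, each equation may be written as $\dot x_i\ge-g_i(t)\,x_i$ with $g_i$ bounded, so that $x_i(t)\ge x_i(0)\exp\!\big(-\int_0^t g_i\big)\ge0$; hence $\mathbb{R}_+^{12}$ is positively invariant.

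For boundedness I would add all twelve equations. The inter-compartmental transfer terms cancel in pairs (each arrow in Figure \ref{fig:1} contributes a loss to one equation and an equal gain to another), leaving only the recruitment and the natural and disease-induced deaths, so that $\frac{dN}{dt}=\Lambda-dN-(\text{non-negative death terms})\le\Lambda-dN$. Comparison with the scalar linear equation $\dot u=\Lambda-du$ (or a direct Gr\"onwall estimate) then gives $N(t)\le\Lambda/d+\big(N(0)-\Lambda/d\big)e^{-dt}$, whence $N(0)\le\Lambda/d$ forces $N(t)\le\Lambda/d$ for all $t\ge0$. Together with the positivity just established, this is precisely the positive invariance of $D$.

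The step I expect to be the real obstacle, rather than a formality, is the well-posedness of the flow, since the forces of infection \eqref{eq:lt}--\eqref{eq:lh} carry the factor $1/N$ and are singular where $N=0$. To legitimise every estimate above I would first record the bounds $T_I+C\le N$ and $H+H_L+C+C_1+C_2\le N$, which give $\lambda_T\le\beta e$ and $\lambda_H\le\lambda\sigma$; this is exactly what makes the functions $g_i$ bounded. To keep the trajectory away from the singularity I would use the companion lower estimate $\frac{dN}{dt}\ge\Lambda-KN$ with $K=d+d_T+d_H+\max\{\gamma,\alpha\}\,d_H^T$, yielding $N(t)\ge N(0)e^{-Kt}+\tfrac{\Lambda}{K}\big(1-e^{-Kt}\big)>0$ for $t>0$ because $\Lambda>0$. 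The vector field is therefore locally Lipschitz on the region actually visited, so a unique solution exists and all of the preceding comparisons are valid---consistent with the authors' remark that the result follows from direct algebraic and differential calculations.
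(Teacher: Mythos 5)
Your proposal is correct, and in fact it supplies more than the paper does: the authors give no proof at all of Theorem \ref{th1}, dismissing it as ``very obvious'' and following ``from direct algebraic and differential calculations.'' Your two-part argument --- quasi-positivity of the vector field on each face of $\mathbb{R}_+^{12}$ for positivity, and summation of the twelve equations followed by comparison with $\dot u=\Lambda-du$ for the bound $N(t)\le\Lambda/d$ --- is precisely the standard calculation that phrase alludes to, so in spirit you are on the same (implied) route. What your writeup genuinely adds is the well-posedness discussion: the forces of infection \eqref{eq:lt}--\eqref{eq:lh} are singular at $N=0$, and your lower bound $\dot N\ge\Lambda-KN$ (note that since $\alpha=\gamma\rho_1/(\rho_1+\rho_2)\le\gamma$ one may simply take $K=d+d_T+d_H+\gamma d_H^T$) keeps the trajectory in the region where the right-hand side is locally Lipschitz, which is needed before any of the comparison estimates are legitimate; the paper silently ignores this. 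One small point of rigor: the inequality $\dot x_i\ge-g_i(t)x_i$ is obtained by discarding terms that are non-negative only if the \emph{other} coordinates are already known to be non-negative, so taken literally it is circular; it should be wrapped in a first-exit-time argument (consider the supremum of times up to which all coordinates remain non-negative and derive a contradiction), or one can simply rely on your face-wise quasi-positivity statement, which is exactly the hypothesis of the standard invariance theorem for the non-negative orthant. With that routine patch, the proof is complete.
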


\section{\textbf{TB Sub-model}}\label{Section3}
We have the TB sub-model when $H= H_L = C = C_1 = C_2 = C_1^T = C_2^T = H^T = 0$, which is given by:
\begin{equation}\label{eq:tb}
\begin{aligned}
 \frac{dS}{dt} &= f_1 = \Lambda - \frac{\beta e S}{N}T_I - dS +r_1 T_T,\\
 \frac{dT_L}{dt} &= f_2 = \frac{\beta e S}{N}T_I - dT_L -k_1 T_L,\\
 \frac{dT_I}{dt} &= f_3 = k_1T_L - (d+d_T)T_I -rT_I,\\
\frac{dT_T}{dt} &= f_4 = rT_I - r_1T_T - dT_T - d_T T_T
\end{aligned}
\end{equation}
with non-negative initial conditions and $\lambda_T = \frac{\beta e T_I}{N}$ is the force of infection. The total population is given by
\begin{align*}
N(t) = S(t)+ T_L(t) + T_I(t) + T_T(t).
\end{align*}
Considering biological constraints, the system \eqref{eq:tb} will be studied in the following region:
\begin{align*}
D_1 =\left\{(S, T_L, T_I, T_T)\in \mathbb{R}_+^{4}: N(t)\leqslant\frac{\Lambda}{d}\right\}.
\end{align*}
It can be easily shown that the solutions $S, T_L, T_I, T_T$ of the system are bounded and positively invariant in $D_1$.
\subsection{\textbf{Disease free equilibrium and stability analysis}}\label{subsection:3.1}

The disease free equilibrium is given by
\begin{align*}
E_0^T= (S_0, T_{L_0}, T_{I_0}, T_{T_0})= \left(\frac{\Lambda}{d},0,0,0\right).
\end{align*}

The basic reproduction number is defined as the average number of new cases of an infection caused by one typical infected individual in a population consisting of susceptibles only \cite{Dkm1, Jones, Dkm2}. Here reproduction number, $\mathcal{R}_0^T$, is defined as the number of TB infections produced by an active TB case. We use the next generation matrix method to find basic reproduction number and the basic reproduction number is given by 
\begin{align}\label{eq:Rnott}
 \mathcal{R}_0^T = \frac{\beta e k_1 \Lambda}{N d(d^{2}+dr+dd_T+dk_1+rk_1+d_Tk_1)}.
\end{align}
We now discuss the stability of disease free equilibrium.
\begin{thm}\label{th2}
The disease-free equilibrium, $E_0^T$ is locally asymptotically stable when $\mathcal{R}_0^T<1$ and unstable when $\mathcal{R}_0^T>1$.
\begin{proof}
The Jacobian matrix of the system \eqref{eq:tb} at $E_0^T$ is given by 
\begin{equation}
J(E_0^T)= \begin{bmatrix}
-d & 0 & -\frac{\beta e \Lambda}{N d} & r_1\\
0 & -d-k_1 & \frac{ \beta e \Lambda }{N d}& 0\\
0 & k_1 & -(d+d_T+r) & 0\\
0 & 0 & r & -(r_1+d+d_T)
\end{bmatrix}.
\label{eq:a}
\end{equation}
The characteristic polynomial is given by
\begin{equation}\label{eq:cp1}
 (x+d)(x+d+d_T+r_1) \left(x^2+(2d+r+d_T+k_1)x 
+(d^2+dr+dd_T+dk_1+rk_1+d_Tk_1)-\frac{e\beta \Lambda k_1}{N d}\right).
\end{equation}
The first two factors are linear and give eigenvalues $-d$ and $-(d+d_T+r_1)$ for \eqref{eq:cp1}, which have negative real parts. For the remaining quadratic factor we use Routh-Hurwitz stability criterion by which  all the coefficient of quadratic polynomial are positive if $\mathcal{R}_0^T<1$. Hence, $E_0^T$ is locally asymptotically stable for $\mathcal{R}_0^T<1$ \cite{Perko}.
\end{proof} 
\end{thm}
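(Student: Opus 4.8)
The plan is to establish the result by linearizing system \eqref{eq:tb} about $E_0^T$ and locating the spectrum of the resulting Jacobian via the Routh--Hurwitz criterion. First I would compute $J(E_0^T)$. The only non-routine step is differentiating the incidence terms of the form $\frac{\beta e S T_I}{N}$, since $N = S + T_L + T_I + T_T$ depends on every state variable. Because each such term carries an explicit factor $T_I$, its partial derivatives with respect to $S$, $T_L$, and $T_T$ all vanish at the equilibrium (where $T_I = 0$), and only the derivative with respect to $T_I$ survives; using $S_0 = N_0 = \Lambda/d$, this surviving entry equals $\frac{\beta e \Lambda}{N d}$. Carrying out this bookkeeping reproduces the matrix \eqref{eq:a}.

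Next I would exploit the sparse, nearly block-triangular structure of $J(E_0^T)$ to factor its characteristic polynomial rather than expanding a full $4 \times 4$ determinant. Since the first column contains only the diagonal entry $-d$, expansion along it peels off the factor $(x+d)$ at once. In the remaining $3 \times 3$ block acting on $(T_L, T_I, T_T)$, the column corresponding to $T_T$ has no entries coupling back into $T_L$ or $T_I$, so a further expansion peels off the factor $(x + r_1 + d + d_T)$. What remains is the $2 \times 2$ block acting on $(T_L, T_I)$, whose characteristic polynomial is the quadratic appearing in \eqref{eq:cp1}.

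It then suffices to apply the Routh--Hurwitz test to this quadratic $x^2 + bx + c$. The coefficient $b = 2d + r + d_T + k_1$ is a sum of non-negative rates and is therefore manifestly positive, so the sign of the real parts is governed entirely by $c = (d^2 + dr + dd_T + dk_1 + rk_1 + d_Tk_1) - \frac{\beta e \Lambda k_1}{N d}$. Comparing $c$ with the definition \eqref{eq:Rnott} of $\mathcal{R}_0^T$ shows that $c > 0$ precisely when $\mathcal{R}_0^T < 1$; in that case both roots of the quadratic have negative real part, all four eigenvalues of $J(E_0^T)$ do as well, and $E_0^T$ is locally asymptotically stable.

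Finally, for the instability claim when $\mathcal{R}_0^T > 1$, I would note that $c < 0$ in this regime, so the product of the two roots of the quadratic is negative; this forces one real root to be strictly positive, producing an eigenvalue with positive real part and rendering $E_0^T$ unstable. I expect the Jacobian computation with the $N$-dependent incidence to be the only genuinely delicate point; once \eqref{eq:a} is in hand, the factoring and the sign analysis are routine.
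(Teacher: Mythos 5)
Your proposal is correct and follows essentially the same route as the paper: compute the Jacobian at $E_0^T$, factor the characteristic polynomial into the linear factors $(x+d)$ and $(x+d+d_T+r_1)$ times a quadratic, and apply the Routh--Hurwitz criterion to that quadratic, whose constant term equals $(d+r+d_T)(d+k_1)\left(1-\mathcal{R}_0^T\right)$. Your write-up is in fact slightly more complete than the paper's, since you justify why the incidence-term derivatives vanish at $T_I=0$ and you handle the instability case $\mathcal{R}_0^T>1$ explicitly (the negative constant term forces a positive real root), a point the paper leaves implicit.
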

We now list two conditions that are sufficient to guarantee the global stability of the disease free equilibrium point. Following Castillo-Chavez et al. \cite{CZ}, we rewrite the model system \eqref{eq:tb} as
\begin{equation}\label{equation:3.5}
\begin{aligned}
&\frac{d\textbf{S}}{dt}= F(\textbf{S, I}),\\
&\frac{d\textbf{I}}{dt} = G(\textbf{S, I}),
\end{aligned}
\qquad
\begin{aligned}
 G(\textbf{S, 0})=0,
\end{aligned}
\end{equation} 
where $\textbf{S}\in\mathbb{R}^2$ denotes the number of uninfected individuals and $\textbf{I}\in\mathbb{R}^2$ denotes the number of infected individuals. 
$\textbf{E}_0=(X^*, 0)$ denotes the disease free equilibrium of system \eqref{equation:3.5}. The conditions $(H1)$ and $(H2)$ below must be satisfied to guarantee a local asymptotic stability.
\newline
(H1) For $ \frac{d\textbf{S}}{dt}=F(\textbf{S},0), S^*$ is globally asymptotically stable, \newline
(H2) $G(\textbf{S, I}) = A\textbf{I}-\hat{G}(\textbf{S,  I}),\;\hat{G}(\textbf{S, I})\geq0$ for $(\textbf{S, I})\in G_2$,\\ 
where $A= D_{\textbf{I}}G(S^*, 0)$ is an M-matrix (the off diagonal elements are non negative) and $G$ is the region where the model makes biological sense. If the system \eqref{eq:tb} satisfies the above two conditions, then the theorem holds.
\begin{thm}\label{thm3}
The fixed point $E_0^T=(S^*,\, 0)$ is a globally asymptotically stable equilibrium of system \eqref{eq:tb} if $\mathcal{R}_0^T<1$ and the assumptions in \eqref{equation:3.5} are satisfied.
\begin{proof}
 We have already proved in theorem \eqref{th2} that for $\mathcal{R}_0^T<1$, $E_0^T$ is locally asymptotically stable. Consider
\begin{flalign}
&\frac{dS}{dt}=\Lambda- \frac{\beta e S}{N}T_I - dS +r_1 T_T = F(S,I),\nonumber \\
&\frac{dI}{dt}= G(S, I),
\end{flalign}

where $G(S,I)= \begin{bmatrix}
\frac{\beta e S T_I}{N}-d T_L -k_1T_L \\
k_1 T_L-(d+d_T)T_I-rT_I\\
rT_I-r_1T_T-dT_T-d_TT_T
\end{bmatrix}$ and 
$F(S, 0)= \begin{bmatrix}
\Lambda - dS\\
0
\end{bmatrix}.$ \newline
Since $dS/dt=F(S,0)$ is a linear equation, $S^*$ is globally stable. Hence, (H1) holds.\newline  
$G(S, I)= AI- \hat{G}(S, I)$ and $
A = \begin{bmatrix}
-d-k_1 & \beta e & 0\\
k_1 & -d-d_T-r & 0\\
0 & r & -r_1-d-d_T
\end{bmatrix}. $ \newline
We get, \newline
$\hat{G}(S, I) = \begin{bmatrix}
\hat{G_1}(S, I)\\
\hat{G_2}(S, I)\\
\hat{G_3}(S, I)
\end{bmatrix}
= \begin{bmatrix}
\beta e T_I(1-\frac{S}{N})\\
0\\
0
\end{bmatrix}.
$ \newline
Since $S$ is always less than or equal to N, $\frac{S}{N} \leq1$ and $\hat{G_1}(S, I)\geqslant1$.
Thus, $\hat{G}(S,I)\geq0$ and this implies that $E_0^T$ is globally asymptotically stable.
\end{proof}
\end{thm}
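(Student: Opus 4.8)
**

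The plan is to verify the two Castillo-Chavez conditions (H1) and (H2) that have just been set up, since the statement explicitly assumes the framework of \eqref{equation:3.5}. Once both conditions hold and local asymptotic stability is already known from Theorem \ref{th2}, the global stability of $E_0^T$ follows directly from the Castillo-Chavez theorem. So the work reduces to two separate checks.

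First I would address (H1). Here the uninfected subsystem is $dS/dt = F(S,0)$, where $F(S,0) = \Lambda - dS$ once the infected variables are set to zero. This is a scalar linear ODE with the unique equilibrium $S^* = \Lambda/d$, and since the coefficient of $S$ is $-d < 0$, every solution converges to $S^*$ globally and monotonically. Thus $S^*$ is globally asymptotically stable in the uninfected dynamics, establishing (H1) with essentially no effort.

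Next I would turn to (H2), which is the substantive step. Writing $\textbf{I} = (T_L, T_I, T_T)^{\mathsf T}$, I would read off the matrix $A = D_{\textbf{I}} G(S^*, 0)$ and confirm it is an M-matrix (off-diagonal entries $k_1$, $\beta e$, $r \geq 0$). The decomposition $G(\textbf{S},\textbf{I}) = A\textbf{I} - \hat{G}(\textbf{S},\textbf{I})$ then forces $\hat{G}$ to collect the remaining nonlinear terms; the only discrepancy between the true incidence $\beta e S T_I/N$ and the linearized term $\beta e\, T_I$ sits in the first component, giving $\hat{G}_1 = \beta e\, T_I\bigl(1 - S/N\bigr)$ and $\hat{G}_2 = \hat{G}_3 = 0$. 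The crux is then to verify $\hat{G}(\textbf{S},\textbf{I}) \geq 0$ throughout the biologically meaningful region. Since $S$ is one of the non-negative summands making up $N = S + T_L + T_I + T_T$, we always have $S \leq N$, hence $S/N \leq 1$ and $1 - S/N \geq 0$; together with $T_I \geq 0$ this yields $\hat{G}_1 \geq 0$, and the other components vanish.

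The main obstacle is really only a bookkeeping one: ensuring that the matrix $A$ and the residual $\hat{G}$ are extracted consistently, so that the identity $G = A\textbf{I} - \hat{G}$ holds componentwise and that $A$ genuinely has non-negative off-diagonal entries (i.e. is a proper M-matrix in the sense required by the theorem). Once these structural facts are in place, the positivity of $\hat{G}$ is immediate from $S \leq N$, and invoking the Castillo-Chavez result completes the argument. I would flag that the inequality $\hat G_1 \geq 0$ is what the paper needs (the stray ``$\geq 1$'' appears to be a typo for ``$\geq 0$''), and that the region of validity is the positively invariant set $D_1$ established earlier.
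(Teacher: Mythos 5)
Your proposal is correct and follows essentially the same route as the paper: verifying (H1) via the linear equation $dS/dt=\Lambda-dS$, extracting $A$ and $\hat{G}$ with $\hat{G}_1=\beta e\,T_I(1-S/N)\geq 0$ from $S\leq N$, and invoking the Castillo-Chavez result together with local stability from Theorem \ref{th2}. You also rightly note that the paper's stated inequality $\hat{G}_1\geqslant 1$ is a typo for $\hat{G}_1\geqslant 0$, which is the condition actually needed.
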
 
\subsection{\textbf{Existence and Stability analysis for endemic equilibrium point}}\label{subsection:3.2}
To find conditions for the existence of an equilibrium for which TB is endemic in the population, denoted by $E_1^T=(\hat{S}, \hat{T_L}, \hat{T_I}, \hat{T_T})$, the equations in \eqref{eq:tb} are solved in terms of force of infection at steady state $\lambda^*_T$, given by 
\begin{align}
\lambda_T^*=\frac{\beta e T_I}{N}.
\end{align}
The system \eqref{eq:tb} reduces to the following:
\begin{align*}
\begin{bmatrix}
-(\lambda_T^*+d) &0 & 0 & r_1\\
\lambda_T^* & -(d+k_1) & 0 & 0\\
0 & k_1 & -d-d_T-r &0\\
0 &0 & r & -(r_1+d+d_T) 
\end{bmatrix}
\begin{bmatrix}
S\\
T_L\\
T_I\\
T_T
\end{bmatrix}=\begin{bmatrix}
-\Lambda\\
0\\
0\\
0
\end{bmatrix}.
\end{align*}
Solving this, we get
\begin{equation}\label{eq:b}
\begin{aligned}
&\hat{S}= \frac{\Lambda(d+r+d_T)(d+k_1)(d+d_T+r_1)}{A}, \\
&\hat{T_L}=\frac{\Lambda(\lambda^*_T(d+r+d_T))(d+d_T+r_1)}{A},\\
&\hat{T_I}=\frac{\Lambda\lambda^*_T k_1(d+d_T+r_1)}{A},\\
&\hat{T_T}=\frac{r\Lambda\lambda^*_T k_1}{A},\\
\end{aligned}
\end{equation}
where A = $(d+r+d_T)(d+\lambda^*_T)(d+k_1)(d+d_T+r_1)-r\lambda_T^* k_1 r_1$. Using (3.8) in (3.7), we get
\begin{align*}
\lambda^*_T=\frac{\beta e \Lambda k_1 \lambda^*_T(d+d_T+r_1)} {N(d+r+d_T)(d+\lambda^*_T)(d+k_1)(d+d_T+r_1)-r \lambda^*_T k_1 r_1}, \\
\end{align*}
which reduces to
\begin{align*}
 \frac{\lambda^*_T {\left[\frac{\beta e \Lambda k_1(d+r_1+d_T)}{N}-(d+r+d_T)(d+\lambda^*_T)(d+k_1)(d+d_T+r_1)+r \lambda^*_T k_1 r_1\right]}}{(d+r+d_T)(d+\lambda^*_T)(d+k_1)(d+d_T+r_1)-r \lambda^*_T k_1 r_1}=0, 
\end{align*}
where $\lambda^*_T = 0$ corresponds to the disease free equilibrium and 
\begin{equation}\label{eq:lmdt}
 \lambda^*_T=\frac{\frac{\beta e \Lambda k_1(d+d_T+r_1)}{N}-d(d+r+d_T)(d+k_1)(d+d_T+r_1)}{(d+d_T+r)(d+k_1)(d+d_T+r_1)-r k_1 r_1},
 \end{equation}
  corresponds to the existence of endemic equilibrium. For a disease to spread, the force of infection $(\lambda^*_T)$ should be positive. It can be seen clearly that denominator of (3.9) is always positive. So for $\lambda^*_T$ to be positive, its numerator should be positive. Therefore,
\begin{align}
& \frac{\beta e \Lambda k_1}{N}-d(d+r+d_T)(d+k_1)(d+d_T+r_1)>0,\nonumber\\
& \frac{\beta e \Lambda k_1}{N d (d+r+d_T)(d+k_1)}>1,\nonumber\\
 \Rightarrow & \mathcal{R}_0^T> 1.
\end{align}
Thus, (3.9) reduces to $\mathcal{R}_0^T>1$. We have just proved the following result.
\begin{thm}\label{lem:thm3}
The submodel system (3.1) has a unique endemic equilibrium whenever $\mathcal{R}_0^T>1$.
\end{thm}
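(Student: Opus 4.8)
The plan is to prove existence and uniqueness simultaneously by reducing the steady-state problem to a single scalar equation for the force of infection $\lambda_T^*$ and then exploiting the linearity of that equation. First I would note that at any endemic equilibrium $\lambda_T^*\neq 0$, for otherwise $\hat{T_I}=0$ and the steady state collapses to the disease-free equilibrium $E_0^T$. Treating $\lambda_T^*$ as a free parameter, solving the linear steady-state system gives the closed-form coordinates in \eqref{eq:b}, all sharing the denominator $A=(d+r+d_T)(d+\lambda_T^*)(d+k_1)(d+d_T+r_1)-r\lambda_T^* k_1 r_1$. Substituting $\hat{T_I}$ into the defining relation $\lambda_T^*=\beta e\,\hat{T_I}/N$ and cancelling the common factor $\lambda_T^*$ produces the equation whose nonzero root is displayed in \eqref{eq:lmdt}.

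The decisive observation for uniqueness is that, after this cancellation, the resulting equation is \emph{linear} in $\lambda_T^*$: the factor $(d+\lambda_T^*)$ occurs to first order and the subtracted term $r\lambda_T^* k_1 r_1$ is likewise linear, so collecting powers of $\lambda_T^*$ rewrites the equation as $[\,(d+d_T+r)(d+k_1)(d+d_T+r_1)-rk_1 r_1\,]\lambda_T^* = \tfrac{\beta e \Lambda k_1(d+d_T+r_1)}{N}-d(d+r+d_T)(d+k_1)(d+d_T+r_1)$. A nondegenerate linear equation has exactly one root, so it remains only to check that the leading coefficient does not vanish. I would verify $(d+d_T+r)(d+k_1)(d+d_T+r_1)-rk_1 r_1>0$ by observing that $rk_1 r_1$ is a single cross term in the expansion of the triple product while every other expansion term is strictly positive; this is precisely the positivity of the denominator of \eqref{eq:lmdt} asserted in the text, and it pins down $\lambda_T^*$ uniquely.

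For existence I would then determine the sign of this unique root. Because the denominator is positive, $\lambda_T^*>0$ exactly when its numerator is positive; factoring out $(d+d_T+r_1)>0$ reduces this to $\tfrac{\beta e \Lambda k_1}{N}-d(d+r+d_T)(d+k_1)>0$, that is, to $\mathcal{R}_0^T>1$ with $\mathcal{R}_0^T$ as in \eqref{eq:Rnott} (using $(d+r+d_T)(d+k_1)=d^2+dr+dd_T+dk_1+rk_1+d_Tk_1$). Finally I would confirm that a positive $\lambda_T^*$ yields a genuinely positive equilibrium, not merely a positive force of infection. The neat point here, and the step I would be most careful about since it is easy to overlook, is that the self-consistency relation forces $A=\tfrac{\beta e \Lambda k_1(d+d_T+r_1)}{N}$ at the endemic value, a manifestly positive constant; hence the common denominator in \eqref{eq:b} is positive, and every coordinate $\hat{S},\hat{T_L},\hat{T_I},\hat{T_T}$ is a quotient of positive quantities and therefore strictly positive. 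This establishes that exactly one biologically admissible endemic equilibrium exists precisely when $\mathcal{R}_0^T>1$.
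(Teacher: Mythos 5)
Your proposal is correct and follows essentially the same route as the paper: reduce the steady state to the scalar self-consistency relation for $\lambda_T^*$, factor out the root $\lambda_T^*=0$ corresponding to $E_0^T$, and observe that the remaining equation is linear with positive leading coefficient, so its unique root is positive exactly when $\mathcal{R}_0^T>1$. Your closing step---using the identity $A=\beta e \Lambda k_1(d+d_T+r_1)/N$ at the endemic value to confirm that all four coordinates in \eqref{eq:b} are strictly positive---is a worthwhile refinement that the paper leaves implicit, but it does not change the argument's structure.
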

\subsection{\textbf{Local stability of endemic equilibrium}}
We prove the local asymptotic stability of the endemic equilibrium $E_1^T$, using the center manifold theory, as described in \cite[Theorem 4.1]{CB1}, with $E_1^T=(\hat{S}, \hat{T_L}, \hat{T_I}, \hat{T_T})$ and each of its components as given in (3.7). Firstly we simplify the system \eqref{eq:tb} to apply this method. Let $S=x_1, T_L=x_2, T_I=x_3$ and $ T_T=x_4$, so that $N=x_1+x_2+x_3+x_4$, then the system can be written in the form $\frac{dX}{dt}=(f_1, f_2, f_3, f_4)^T$.
The basic reproduction number of the system \eqref{eq:tb} is given by \eqref{eq:Rnott}. We choose a bifurcation parameter $\beta^*$, by solving for $\beta e$ from $\mathcal{R}_0^T=1$:
\begin{align*}
\beta^*= \frac{N d(d+r+d_T)(d+k_1)}{\Lambda k_1}.
\end{align*}  
The system \eqref{eq:tb} has a disease free equilibrium given by 
\begin{align*}
E_0^T=(x_{1_0}, x_{2_0}, x_{3_0}, x_{4_0})= \left( \frac{\Lambda}{d}, 0, 0, 0\right).
\end{align*}
The Jacobian matrix of the linearized system of (3.11) evaluated at $E_0^T$, $J(E_0^T)$ is given as in \eqref{eq:a}. $J(E_0^T)|_{\beta^*}$ has a zero eigenvalue which is simple and all the other eigenvalues have negative real parts, therefore the center manifold theory can be applied. \newline
The Jacobian matrix $J(E_0^T)|_{\beta^*}$ has a right eigenvector, associated with zero eigenvalue which is given by $w=(w_1, w_2, w_3, w_4)^T$, where
\begin{equation}
\begin{aligned}
& w_1=\frac{r_1}{d}-\frac{(d+r+d_T)(d+k_1)(d+d_T+r_1)}{d r k_1}, \\
& w_2 = \frac{(d+r+d_T)(d+d_T+r_1)}{r k_1},\\
& w_3 = \frac{(d+d_T+r_1)}{r},\\
& w_4 = 1.
\end{aligned} 
 \end{equation}
 Since $x_{1_0}>0$, there is no restriction on sign of $w_1$ [refer \cite{CB1}, Remark 1, pg. 375], while $w_i>0\;\forall\;i\neq1$.
 \newline
 Further, $J(E_0^T)|_{\beta^*}$ has a left eigenvector $v=(v_1, v_2, v_3, v_4)$, associated with the zero eigenvalue, where
 \begin{equation}
 \begin{aligned}
& v_1 = 0, \\
& v_2 = \frac{k_1}{(d+k_1)}, \\
& v_3 = 1, \\
& v_4 = 0. 
\end{aligned}
\end{equation}
The local stability near the bifurcation point $\beta^*=\beta e$ is determined by the signs of the two associated constants, denoted by $a$ and $b$ which are defined by
\begin{equation}\label{eq:abl}
\begin{aligned}
& a= \sum\limits_{k,i,j=1}^4 v_k w_i w_j\frac{\partial^2{f_k}}{\partial x_i \partial x_j}(0, 0),\\
& b = \sum\limits_{k,i,j=1}^4 v_k w_i \frac{\partial ^2 f_k}{\partial x_i \partial \phi}(0, 0), 
\end{aligned}
\end{equation}
with $\phi =\beta e - \beta^*$ and for $\beta e = \beta^*, \phi=0$. \newline
For the system \eqref{eq:tb}, the associated non-zero partial derivatives at $E_0^T$ are
\begin{equation*}
\begin{aligned}
\frac{\partial^2f_2}{\partial x_2 \partial x_3}=& -\frac{\beta^* d}{\Lambda},\\ 
\frac{\partial^2f_2}{\partial x_3 \partial x_4}=& -\frac{\beta^* d}{\Lambda},
\end{aligned}
\qquad
\begin{aligned}
\frac{\partial^2 f_2}{\partial x_3^2}=& -2 \frac{ \beta^* d}{\Lambda},\\
\frac{\partial^2 f_2}{\partial x_3 \partial \beta^*}=&1.
\end{aligned}
\end{equation*}
From the above expressions, we get
\begin{equation}
\begin{aligned}
 a  = & -\frac{(d+d_T+r)^2(d+r+d_T)^2 d}{\Lambda^2 r k_1} -\frac{2N d^2 (d+d_T+r_1)^2 (d+r+d_T)}{r^2 \Lambda^2}\\
 & -\frac{N d^2(d+d_T+r_1)(d+d_T+r)}{r \Lambda^2},\\
b  = & \frac{k_1 (d+d_T+r_1)}{r(d+k_1)}.
\end{aligned}
\end{equation}
Since all the terms in the expressions in (3.15) are positive, therefore $a<0$ and $b>0$. Thus, \cite[Theorem 4.1]{CB1} implies that the unique equilibrium point of  system \eqref{eq:tb}, which exists when $\mathcal{R}_0^T>1$ is locally asymptotically stable when $\beta^*< \beta e$ with $\beta e$ close to $\beta^*$ and the system undergoes transcritical bifurcation at $\beta^*=\beta e$ which is supercritical or forward. Hence, the following result is established.
\begin{thm}
The endemic equilibrium $E_1^T$ is locally asymptotically stable for the basic reproduction number $\mathcal{R}_0^T>1$ and the system undergoes  supercritical transcritical bifurcation at $R_0=1$ and $\beta^*=\beta e$ acts as the bifurcation parameter.
\end{thm}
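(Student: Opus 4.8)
The plan is to prove both assertions together by means of the center manifold reduction of Castillo-Chavez and Song \cite[Theorem 4.1]{CB1}, in the form already set up in the paragraphs preceding the statement. First I would write \eqref{eq:tb} in the coordinates $x_1=S$, $x_2=T_L$, $x_3=T_I$, $x_4=T_T$ with $N=x_1+x_2+x_3+x_4$, and take $\beta e$ as the bifurcation parameter, its critical value $\beta^*$ being the solution of $\mathcal{R}_0^T=1$. The first hypothesis of the theorem is that $J(E_0^T)|_{\beta^*}$ in \eqref{eq:a} admits a simple zero eigenvalue while all its other eigenvalues have negative real part. This is read off from the characteristic polynomial \eqref{eq:cp1} used in Theorem \ref{th2}: the two linear factors contribute the negative roots $-d$ and $-(d+d_T+r_1)$, and the constant term of the quadratic factor equals $(d+k_1)(d+d_T+r)\bigl(1-\mathcal{R}_0^T\bigr)$, which vanishes exactly at $\beta e=\beta^*$; since the linear coefficient $2d+r+d_T+k_1$ of that quadratic is positive, its second root is strictly negative, so the zero eigenvalue is simple and the hypothesis holds.

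Next I would record the right and left null eigenvectors $w=(w_1,w_2,w_3,w_4)^T$ and $v=(v_1,v_2,v_3,v_4)$ of $J(E_0^T)|_{\beta^*}$ given above, obtained by straightforward back-substitution after normalizing $w_4=1$ and $v_3=1$; this also yields $v_1=v_4=0$ and $v_2=k_1/(d+k_1)$. The only point needing care is that the sign of $w_1$ is unconstrained, which is legitimate because the associated coordinate $x_1=S$ does not vanish at the disease-free equilibrium \cite[Remark 1]{CB1}; the remaining $w_i$ are positive.

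The heart of the proof is the evaluation of the constants $a$ and $b$ of \eqref{eq:abl}. Because the sole nonlinearity in \eqref{eq:tb} is the incidence $\beta e\,x_1 x_3/N$, its second-order partial derivatives at $E_0^T$ are nonzero only in $f_1$ and $f_2$, and since $v_1=0$ the contribution of $f_1$ drops out while $f_3,f_4$ are linear; hence only the term $k=2$, weighted by $v_2$, survives in both sums. The hard part will be carrying out these second derivatives correctly, because $N$ itself depends on all four coordinates: differentiating the factor $1/N$ at $E_0^T$, where $N=\Lambda/d$, produces the values $\partial^2 f_2/\partial x_3^2=-2\beta^* d/\Lambda$, $\partial^2 f_2/\partial x_2\partial x_3=\partial^2 f_2/\partial x_3\partial x_4=-\beta^* d/\Lambda$ and $\partial^2 f_2/\partial x_3\partial\beta^*=1$ recorded above. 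Combining these with the products $w_iw_j$ (respectively $w_i$) and with $v_2$ assembles the expressions displayed for $a$ and $b$; every term there is manifestly positive, so that $a<0$ and $b>0$.

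Finally, with $a<0$ and $b>0$, the sign dichotomy of \cite[Theorem 4.1]{CB1} identifies the bifurcation at $\beta e=\beta^*$ (equivalently at $\mathcal{R}_0^T=1$) as forward, i.e.\ supercritical: the disease-free equilibrium exchanges stability with a positive equilibrium that bifurcates from it and is locally asymptotically stable for $\beta e$ slightly larger than $\beta^*$, that is for $\mathcal{R}_0^T>1$. Since Theorem \ref{lem:thm3} already guarantees that for $\mathcal{R}_0^T>1$ this branch is the unique endemic equilibrium $E_1^T$, the local asymptotic stability of $E_1^T$ together with the claimed supercritical transcritical bifurcation follows at once.
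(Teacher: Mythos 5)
Your proposal is correct and follows essentially the same route as the paper: the center manifold argument of Castillo-Chavez and Song with $\beta^*=\beta e$ as bifurcation parameter, the same null eigenvectors $w$ and $v$, the same nonzero second partials of $f_2$, and the same conclusion $a<0$, $b>0$ yielding a forward bifurcation and local asymptotic stability of $E_1^T$. If anything, you supply two details the paper leaves implicit --- the explicit check via the characteristic polynomial that the zero eigenvalue at $\beta^*$ is simple (constant term $(d+k_1)(d+d_T+r)\bigl(1-\mathcal{R}_0^T\bigr)$ with positive linear coefficient), and the observation that only the $k=2$ terms survive in the sums for $a$ and $b$ since $v_1=v_3 f_3''=v_4 f_4''=0$ --- which strengthens rather than alters the argument.
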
   

\section{\textbf{HIV Submodel}}\label{Section4} 
We have HIV submodel when $T_L= T_I = T_T = C = C_1 = C_2 = C_1^T = C_2^T = 0$, which is given by the following system of equations:
\begin{equation}\label{eq:hiv}
\begin{aligned}
 \frac{dS}{dt}= & \Lambda -dS - \frac{\lambda \sigma S H}{N}, \\
 \frac{dH}{dt}= & \frac{\lambda \sigma S H}{N}-r_2 H-(d+d_H)H, \\
 \frac{dH^T}{dt}= & r_2 H -(d+d_H)H^T,
\end{aligned}
\end{equation}
with non-negative initial conditions and the force of infection is given by
\begin{align*}
\lambda_H= \frac{\lambda \sigma H}{N},    
\end{align*}
and the total population for the system is $N= S+H+H^T$. Due to biological constraints, the system \eqref{eq:hiv} is studied in the following region
\begin{align*}
D_2 =\left\{(S,\; H,\; H^T)\in \mathbb{R}_+^{3}: N(t)\leqslant\frac{\Lambda}{d}\right\}.
\end{align*}
This is easy to prove that the solutions $S,\; H$ and $H^T$ of the system \eqref{eq:hiv} are bounded and positively invariant in $D_2$.

\subsection{\textbf{Disease free equilibrium and stability analysis}}
The disease free equilibrium for the system \eqref{eq:hiv} is given by
\begin{align*}
E_0^H=(S_0,\; H_0,\; H^T_0)= \left(\frac{\Lambda}{d},\; 0,\; 0,\; 0\right),
\end{align*}
and the basic reproduction number is given by
\begin{equation}
\mathcal{R}_0^H= \frac{\Lambda \lambda \sigma}{N d(r_2+d+d_H)}.
\end{equation}
Next we discuss the stability of disease free equilibrium.
\begin{thm}\label{theorem:thm4.1}
The disease free equilibrium, $E_0^H$ is locally asymptotically stable when $\mathcal{R}_0^H<1$ and unstable when $\mathcal{R}_0^H>1$.
\begin{proof}
The Jacobian matrix of the system \eqref{eq:hiv} at $E_0^H$ is given by
\begin{equation}
 J(E_0^H)=\begin{bmatrix}
-d & -\frac{\Lambda\lambda \sigma}{N d} & 0\\
0 & \frac{\Lambda\lambda \sigma}{N d}-(r_2+d+d_H) & 0\\
0 & r_2 & -(d+d_H)
\end{bmatrix},
\end{equation}
which has characteristic polynomial
\begin{align}
(d + x) ( x + d + d_H) \left(\frac{\Lambda \lambda \sigma}{N d} -x -d - d_H - r_2\right).
\end{align}
Eigenvalues for (4.3) are $-d, -(d+d_H)$ and $ (\frac{\Lambda\lambda \sigma}{N d}-(d+d_H+r_2) )$ which have negative real parts when $\frac{\Lambda\lambda \sigma}{N d(d+d_H+r_2)}<1$ i.e. $\mathcal{R}_0^H<1$. Thus, $E_0^H$ is locally asymptotically stable for $\mathcal{R}_0^H<1$ and unstable for $\mathcal{R}_0^H>1$.
\end{proof}
\end{thm}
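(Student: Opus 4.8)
The plan is to prove local asymptotic stability by linearising the system \eqref{eq:hiv} about the disease-free equilibrium $E_0^H=(\Lambda/d,\,0,\,0)$ and reading off the eigenvalues of the resulting Jacobian, in direct parallel with the argument already used for the TB submodel in Theorem \ref{th2}. First I would form $J(E_0^H)$ by differentiating the three vector fields in \eqref{eq:hiv} with respect to $S$, $H$ and $H^T$. The only nonlinearity is the incidence term $\lambda\sigma SH/N$ with $N=S+H+H^T$, so the one point requiring care is the differentiation of this ratio. The crucial simplification is that at $E_0^H$ we have $H=0$: every partial derivative that retains a factor of $H$ — in particular all the contributions produced by differentiating the denominator $N$ — vanishes there. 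What survives is just $\partial(\dot S)/\partial H=-\Lambda\lambda\sigma/(Nd)$ and the diagonal entry $\partial(\dot H)/\partial H=\Lambda\lambda\sigma/(Nd)-(r_2+d+d_H)$, so the $\dot S$ and $\dot H^T$ rows decouple and the Jacobian is block lower-triangular.

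Next I would exploit this triangular structure to avoid the Routh--Hurwitz step that was needed for the quadratic factor in Theorem \ref{th2}: here the characteristic polynomial factors completely into three linear terms, and the eigenvalues are read off directly from the diagonal. Two of them, $-d$ coming from the $S$-equation and $-(d+d_H)$ coming from the $H^T$-equation, are manifestly negative for every admissible choice of parameters. All of the threshold information is therefore concentrated in the single remaining eigenvalue, the $H$-equation diagonal entry $\Lambda\lambda\sigma/(Nd)-(r_2+d+d_H)$.

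Finally I would determine the sign of this third eigenvalue. It is negative precisely when $\Lambda\lambda\sigma/(Nd)<r_2+d+d_H$, that is, when $\mathcal{R}_0^H=\Lambda\lambda\sigma/\bigl(Nd(r_2+d+d_H)\bigr)<1$, and positive when $\mathcal{R}_0^H>1$. In the former case all three eigenvalues have negative real part, so by the linearisation theorem \cite{Perko} the equilibrium $E_0^H$ is locally asymptotically stable; in the latter case the single positive eigenvalue makes $E_0^H$ unstable, which yields the stated dichotomy. The main obstacle is bookkeeping rather than anything conceptual: once the derivatives of $SH/N$ are computed correctly and one observes that they collapse under $H=0$, the triangular form does all the remaining work, and no stability criterion beyond inspecting the signs of the three diagonal entries is required.
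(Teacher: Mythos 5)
Your proposal is correct and follows essentially the same route as the paper: linearise \eqref{eq:hiv} at $E_0^H$, observe that the characteristic polynomial factors into three linear terms giving eigenvalues $-d$, $-(d+d_H)$ and $\frac{\Lambda\lambda\sigma}{Nd}-(r_2+d+d_H)$, and conclude stability precisely when $\mathcal{R}_0^H<1$. The only cosmetic difference is that you justify the triangular structure explicitly (noting that all derivatives of the incidence term carrying a factor of $H$ vanish at the equilibrium), whereas the paper simply states the Jacobian and its factored characteristic polynomial.
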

We now discuss the global stability of disease free equilibrium $E_0^H$.
\begin{thm}
The fixed point $E_0^H=(S^*,0)$ is a globally asymptotically stable equilibrium of system \eqref{eq:hiv} provided that $\mathcal{R}_0^H<1$ and the assumptions in \eqref{equation:3.5} are satisfied.
\begin{proof}
Proof is similar to the theorem \ref{thm3} and hence omitted.
\end{proof}
\end{thm}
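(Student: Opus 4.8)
The plan is to mirror the Castillo-Chavez argument of Theorem \ref{thm3}, checking the two sufficient conditions (H1) and (H2) for the system \eqref{eq:hiv}. First I would split the state into the uninfected block $\mathbf{S}=S$ and the infected block $\mathbf{I}=(H,H^T)$, so that the disease-free equilibrium reads $E_0^H=(S^*,0)$ with $S^*=\Lambda/d$. Setting $F(S,I)=\Lambda-dS-\lambda\sigma SH/N$ and
\[
G(S,I)=\begin{bmatrix}\dfrac{\lambda\sigma S H}{N}-(r_2+d+d_H)H\\[2mm] r_2H-(d+d_H)H^T\end{bmatrix},
\]
one verifies immediately that $G(S,0)=0$, so the system is cast in the form \eqref{equation:3.5}.

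For (H1), the reduced equation $dS/dt=F(S,0)=\Lambda-dS$ is linear with the unique attracting equilibrium $S^*=\Lambda/d$; hence $S^*$ is globally asymptotically stable and (H1) holds.

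For (H2) I would compute $A=D_{\mathbf I}G(S^*,0)$. The only point needing care is the differentiation of the incidence term $\lambda\sigma SH/N$; since $S=N$ at the DFE its linear part in $H$ reduces to $\lambda\sigma H$, so that
\[
A=\begin{bmatrix}\lambda\sigma-(r_2+d+d_H) & 0\\ r_2 & -(d+d_H)\end{bmatrix}.
\]
Its off-diagonal entries $0$ and $r_2$ are non-negative, so $A$ is an M-matrix in the sense used above. Writing $\hat G(S,I)=A\mathbf I-G(S,I)$, the second component vanishes identically while the first collapses to $\lambda\sigma H\,(1-S/N)$. Because $N=S+H+H^T$ with all compartments non-negative, one always has $S\le N$, whence $1-S/N\ge 0$ and $\hat G(S,I)\ge 0$ throughout the feasible region $D_2$; this establishes (H2).

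With (H1) and (H2) in hand, together with the local asymptotic stability for $\mathcal R_0^H<1$ already obtained in Theorem \ref{theorem:thm4.1}, the result of Castillo-Chavez et al. \cite{CZ} delivers the global asymptotic stability of $E_0^H$. I do not expect a genuine obstacle here: the single delicate step is the book-keeping of the $N$-dependence in the nonlinear incidence and confirming the sign of $\hat G$, which is precisely the same mechanism ($S\le N$) that drove the TB computation in Theorem \ref{thm3}. This structural parallelism is exactly why the authors can dismiss the proof as ``similar to Theorem \ref{thm3}.''
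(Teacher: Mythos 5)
Your proposal is correct and follows exactly the route the paper intends: the paper omits the proof as ``similar to Theorem \ref{thm3},'' and your verification of (H1) and (H2) for the decomposition $\mathbf{S}=S$, $\mathbf{I}=(H,H^T)$, with $\hat{G}(S,I)$ reducing to $\bigl(\lambda\sigma H(1-S/N),\,0\bigr)\geq 0$, is precisely the Castillo-Chavez argument of that theorem transplanted to the HIV sub-model. No gaps; your computation of $A$ and the sign check via $S\leq N$ are accurate.
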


\subsection{\textbf{Existence and Stability of Endemic Equilibrium point}}
The endemic equilibrium is given by
\begin{align*}
E_1^H= (\tilde{S},\; \tilde{H},\; \tilde{H}^T),
\end{align*}
where
\begin{equation}\label{eq:4.6}
\begin{aligned}
 \tilde{S}=&\frac{N(r_2+d+d_H)}{\lambda \sigma},\\
\tilde{H}=& \frac{\Lambda}{(r_2+d+d_H)}- \frac{d N}{\lambda \sigma},\\
 \tilde{H^T}= & \frac{r_2}{(r_2+d+d_H)}\left( \frac{\Lambda}{(r_2+d+d_H)}-\frac{d N}{\lambda \sigma}\right).
\end{aligned}
\end{equation}
The HIV endemic exists when $\lambda_H^*$, given by
\begin{align}
\lambda_H^* = \frac{\lambda \sigma \hat{H}}{N},
\end{align}
is positive. Using (4.6) in (4.7), if $\lambda_H^* > 0$, then
\begin{align*}
& \frac{\lambda \sigma}{N} \left( \frac{\Lambda}{(r_2 +d +d_H)}- \frac{d N}{\lambda \sigma} \right) > 0, \\
& \frac{\lambda \sigma}{(r_2 +d +d_H)} > \frac{N d}{\Lambda},\\ 
\end{align*}
Therefore, $\mathcal{R}_0^H > 1.$
Thus, the endemic equilibrium exists when $\mathcal{R}_0^H>1$. We have just proved the following result.
\begin{thm}\label{thm:4.3}
The endemic equilibrium, $E_1^H$ exists whenever $\mathcal{R}_0^H>1.$  
\end{thm}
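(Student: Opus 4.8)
The plan is to obtain the endemic equilibrium by solving the steady-state equations of system \eqref{eq:hiv} explicitly and then to determine the positivity condition under which the resulting point lies in the biologically meaningful region $D_2$. Throughout I treat $N$ as the fixed normalizing population appearing in the frequency-dependent incidence, exactly as in the definition of $\mathcal{R}_0^H$, so that once $\tilde{S}$ is pinned down the remaining equilibrium equations become effectively linear.

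First I would use the defining feature of an \emph{endemic} equilibrium, namely $\tilde{H}\neq 0$. Setting $\frac{dH}{dt}=0$ and cancelling the factor $\tilde{H}$ in the second equation gives $\frac{\lambda\sigma\tilde{S}}{N}=r_2+d+d_H$, hence $\tilde{S}=\frac{N(r_2+d+d_H)}{\lambda\sigma}$, which is the first line of \eqref{eq:4.6} and is automatically positive. Substituting $\frac{\lambda\sigma\tilde{S}}{N}=r_2+d+d_H$ into $\frac{dS}{dt}=0$ collapses the $S$-equation to the single linear relation $\Lambda-d\tilde{S}-(r_2+d+d_H)\tilde{H}=0$, which I would solve for $\tilde{H}=\frac{\Lambda}{r_2+d+d_H}-\frac{dN}{\lambda\sigma}$. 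Finally, setting $\frac{dH^T}{dt}=0$ expresses $\tilde{H}^T$ as a fixed positive multiple of $\tilde{H}$, recovering the last line of \eqref{eq:4.6}; in particular $\tilde{H}^T$ carries the same sign as $\tilde{H}$.

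With the equilibrium components in hand, the existence claim reduces to a single positivity check. Since $\tilde{S}>0$ holds unconditionally and $\tilde{H}^T$ inherits the sign of $\tilde{H}$, the point $E_1^H$ belongs to the positive orthant if and only if $\tilde{H}>0$. I would then rewrite $\tilde{H}>0$ as $\frac{\Lambda}{r_2+d+d_H}>\frac{dN}{\lambda\sigma}$, and on rearranging this is precisely $\frac{\Lambda\lambda\sigma}{Nd(r_2+d+d_H)}>1$, that is $\mathcal{R}_0^H>1$. This chain of equivalences settles both directions: $\mathcal{R}_0^H>1$ guarantees a positive endemic equilibrium, while $\mathcal{R}_0^H\le 1$ forces $\tilde{H}\le 0$ and hence no admissible endemic state. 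There is no genuine analytic obstacle here — the computation is routine — so the only point requiring care is the bookkeeping that makes positivity of the \emph{single} coordinate $\tilde{H}$ both necessary and sufficient for existence, together with the convention that $N$ is held fixed inside the incidence term.
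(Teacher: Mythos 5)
Your proposal is correct and follows essentially the same route as the paper: both solve the steady-state equations to obtain the components in \eqref{eq:4.6} and then observe that positivity of the infected component (the paper phrases this as positivity of the force of infection $\lambda_H^* = \lambda\sigma\tilde{H}/N$, which is equivalent to $\tilde{H}>0$) is exactly the condition $\mathcal{R}_0^H>1$. The only difference is cosmetic — you check $\tilde{H}>0$ directly rather than $\lambda_H^*>0$, and you additionally note the converse direction and the (paper's own) convention of treating $N$ as fixed in the incidence term.
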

We now discuss the stability of the endemic equilibrium point $E_1^H$. 
\begin{thm}
The endemic equilibrium $E_1^H$ is locally asymptotically stable for the basic reproduction number $\mathcal{R}_0^H>1$.
\end{thm}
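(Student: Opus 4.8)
The plan is to establish local asymptotic stability by linearising the HIV submodel \eqref{eq:hiv} about the endemic equilibrium $E_1^H$ and showing that every eigenvalue of the resulting Jacobian has negative real part whenever $\mathcal{R}_0^H>1$. Following the same convention used in the proof of Theorem \ref{theorem:thm4.1}, I would treat $N$ as a constant in the force-of-infection term, so that the computation is consistent with the disease-free analysis already carried out in this section and reduces to a purely algebraic verification.

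First I would write down $J(E_1^H)$. The useful structural observation is that, with $N$ held constant, the $H^T$ equation in \eqref{eq:hiv} does not feed back into the $S$ and $H$ equations, so the Jacobian is block lower-triangular in the ordering $(S,H,H^T)$: the third row contributes the eigenvalue $-(d+d_H)<0$ outright, and the stability is governed entirely by the upper-left $2\times 2$ block in the variables $(S,H)$. This reduces what would otherwise be a cubic Routh--Hurwitz check to a simple trace--determinant test.

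Next I would evaluate that block using the equilibrium identities read off from \eqref{eq:4.6}. Since $\tilde H\neq 0$, the condition $\dot H=0$ gives $\lambda\sigma\tilde{S}/N = r_2+d+d_H$, and substituting $\tilde S,\tilde H$ from \eqref{eq:4.6} yields $\lambda\sigma\tilde{H}/N = d(\mathcal{R}_0^H-1)$. With these relations the $2\times 2$ block collapses to
\begin{align*}
\begin{bmatrix}
-d\mathcal{R}_0^H & -(r_2+d+d_H)\\[2pt]
d(\mathcal{R}_0^H-1) & 0
\end{bmatrix},
\end{align*}
whose trace $-d\mathcal{R}_0^H$ is negative and whose determinant $d(r_2+d+d_H)(\mathcal{R}_0^H-1)$ is positive precisely when $\mathcal{R}_0^H>1$. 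Both eigenvalues of the block then have negative real part, and together with $-(d+d_H)$ this forces every eigenvalue of $J(E_1^H)$ to have negative real part, so by the Routh--Hurwitz criterion \cite{Perko} the equilibrium $E_1^H$ is locally asymptotically stable.

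I expect the only real difficulty to be bookkeeping rather than anything conceptual: one must substitute the somewhat unwieldy expressions in \eqref{eq:4.6} correctly and apply the equilibrium identities to reach the clean block above. As an alternative route that would mirror the treatment of the TB submodel, one could instead invoke the centre-manifold method of \cite{CB1} with bifurcation parameter $\lambda^*=\lambda\sigma$ obtained by solving $\mathcal{R}_0^H=1$, compute the associated constants $a$ and $b$, and show $a<0$ and $b>0$ to conclude a supercritical transcritical bifurcation and hence local stability of $E_1^H$ for $\mathcal{R}_0^H$ just above $1$; there the harder part would be the sign analysis of $a$.
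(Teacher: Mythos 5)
Your proof is correct and takes essentially the same route as the paper: linearize the submodel \eqref{eq:hiv} at $E_1^H$ (with $N$ treated as constant in the force of infection), peel off the eigenvalue $-(d+d_H)$ coming from the $H^T$ row, and settle the remaining $2\times 2$ block by a Routh--Hurwitz/trace--determinant test, which yields stability exactly when $\mathcal{R}_0^H>1$. Your bookkeeping in terms of $\mathcal{R}_0^H$ is in fact slightly cleaner and gets the sign of the $(1,1)$ entry right as $-d\mathcal{R}_0^H$, whereas the paper's displayed Jacobian carries a sign typo there (its characteristic polynomial is nonetheless consistent with your negative value).
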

\begin{proof}
The Jacobian matrix of the system \eqref{eq:hiv} at $E_1^H$ is given by
\begin{equation}
\begin{bmatrix}
\frac{\Lambda \lambda \sigma}{N(r_2+d+d_H)} & -(r_2+d+d_H) & 0\\
\frac{\Lambda \lambda \sigma}{N(r_2+d+d_H)}-d & 0 & 0\\
0 & r_2 & -(d+d_H)
\end{bmatrix}.
\end{equation} 
The characteristic equation for (4.8) is given by
\begin{equation}
(x+d+d_H)\left(x^2+\frac{\lambda \sigma x}{N(d+d_H+r_2)}- d(r_2+d+d_H)+\frac{\Lambda \lambda \sigma}{N}\right)= 0.
\end{equation}
The factor $(x+d+d_H)$ gives an eigenvalue $-d-d_H$, which has negative real part. For the other quadratic factor we use Routh Hurwitz criterion of stability, by which all the coefficients in the quadratic factor should be positive when $ \mathcal{R}_0^H > 1$. Therefore, in $D_2$ whenever  $\mathcal{R}_0^H>1$, $E_1^H$ is locally asymptotically stable.
\end{proof}
\section{\textbf{Analysis of the main model}}\label{Section5}
 In this section, we analyze the main model \eqref{eq:main}. Biologically, the full model can have four equilibria, namely, disease free equilibrium $E_0$, TB only endemic equilibrium $E^T$, HIV only endemic equilibrium $E^H$ and the interior endemic equilibrium point $E^{TH}$. 
 \subsection{\textbf{The disease free equilibrium and stability analysis}}
 The disease free equilibrium is given by
\begin{align*}
 E_0 = \left( \frac{\Lambda}{d}, 0, 0, 0, 0, 0, 0, 0, 0, 0, 0, 0 \right).
\end{align*}
First, we calculate the basic reproduction number by next generation operator method as in subsection \eqref{subsection:3.1}. The transition matrix $T$ and the transmission matrix $\Sigma$, are as follows:
\begin{align}
T=\begin{bmatrix}
0 & \frac{\beta e \Lambda}{N d} & 0 & 0 & \frac{\beta e \Lambda}{N d} & 0 & 0\\
0 & 0 & 0 & 0 & 0 & 0 & 0\\
0 & 0 & \frac{\Lambda \lambda \sigma}{N d} & \frac{\Lambda \lambda \sigma}{N d} & \frac{\Lambda \lambda \sigma}{N d} & \frac{\Lambda \lambda \sigma}{N d} & \frac{\Lambda \lambda \sigma}{N d}\\
0 & 0 & 0 & 0 & 0 & 0 & 0\\
0 & 0 & 0 & 0 & 0 & 0 & 0\\
0 & 0 & 0 & 0 & 0 & 0 & 0\\
0 & 0 & 0 & 0 & 0 & 0 & 0\\
\end{bmatrix},
\end{align}
\begin{align}
\Sigma=\begin{bmatrix}
-d-k_1 & 0 & 0 & 0 & 0 & 0 & 0 \\
k_1 & -d-d_T-r & 0 & 0 & 0 & 0 & 0\\
0 & 0 & -r_2-d-d_H & 0 & 0 & 0 & \rho_2\\
0 & 0 & 0 & k_2-d-d_H & 0 & 0 & 0\\
0 & 0 & 0 & k_2 & C_1 & 0 & 0\\
0 & 0 & 0 & 0 & r_3 & C_2 & 0\\
0 & 0 & 0 & 0 & 0 & \rho_1 & C_3
\end{bmatrix},
\end{align}
where \\
$C_1= -d-d_T-d_H-r_3,\\
C_2=-d-d_H-\rho_1-\eta_1,\\
C_3=-d-d_H-\rho_2-\eta_2.$ \newline
The dominant eigenvalues of $-T\Sigma^{-1}$ are 
\begin{align*}
\mathcal{R}_0^T =& \frac{\beta e \Lambda k_1}{N d(d^2 + dr + dd_T+ dk_1 + rk_1 + d_T k_1)} = \frac{\beta e \Lambda k_1}{N d (d+r+d_T)(d+k_1)},\\
 \mathcal{R}_0^H =& \frac{\Lambda \lambda \sigma}{N d(r_2+d+d_H)}. 
\end{align*}
Thus, the basic reproduction number of the model \eqref{eq:main} is given by
\begin{align}\label{rp}
\mathcal{R}_0 = \max\{\mathcal{R}_0^T, \mathcal{R}_0^H\}.
\end{align}
\begin{thm}
The disease free equilibrium, $E_0$ is locally asymptotically stable when $\mathcal{R}_0<1$ and unstable when $\mathcal{R}_0>1$.
\begin{proof}
The Jacobian matrix $J(E_0)$ of the model system \eqref{eq:main} at $E_0$ is given by
\begin{equation}\label{eq:jmain}
\begin{bmatrix}
-d & 0 & -\frac{\beta e \Lambda}{N d} & -\frac{\Lambda \lambda \sigma}{N d} & -\frac{\Lambda \lambda \sigma}{N d} & \frac{-\Lambda(\beta e + \lambda \sigma)}{N d} & -\frac{\Lambda \lambda \sigma}{N d} & -\frac{\Lambda \lambda \sigma}{N d} & 0 & 0 & 0 & r_1 \\ 
0 & -d-k_1 & \frac{ \beta e \Lambda}{N d} & 0 & 0 & 0  & 0 & 0 & 0 & 0 & 0 & 0 \\
0 & k_1 & C_0 & 0 & 0 & 0 & 0 & 0 & 0 & 0 & 0 & 0\\
0 & 0 & 0 & C_1 & \frac{\Lambda \lambda \sigma}{N d} & \frac{\Lambda \lambda \sigma}{N d} & \frac{\Lambda \lambda \sigma}{N d} & C_2 & 0 & 0 & 0 & 0\\
0 & 0 & 0 & 0 & C_3 & 0 & 0 & 0 & 0 & 0 & 0 & 0\\
0 & 0 & 0 & 0 & k_2 & C_4 & 0 & 0 & 0 & 0 & 0 & 0\\
0 & 0 & 0 & 0 & 0 & r_3 & C_5 & 0 & 0 & 0 & 0 & 0\\
0 & 0 & 0 & 0 & 0 & 0 & \rho_1 & C_6 & 0 & 0 & 0 & 0\\
0 & 0 & 0 & 0 & 0 & 0 & \eta_1 & 0 & C_7 & 0 & 0 & 0\\
0 & 0 & 0 & 0 & 0 & 0 & 0 & \eta_2 & \rho_1 & C_8 & 0 & 0\\
0 & 0 & 0 & 0 & 0 & 0 & 0 & 0 & 0 & \rho_2 & -d-d_H & 0 \\
0 & 0 & r & 0 & 0 & 0 & 0 & 0 & 0 & 0 & 0 & -d-d_T     
\end{bmatrix}
\end{equation} 
where, \\
$C_0 =-(d+r+d_T), \\
C_1=  \frac{\Lambda \lambda \sigma}{N d}-(d+d_H+r_2),\\
C_2= \frac{\Lambda \lambda \sigma}{N d}+\rho_2,\\
C_3=-(d+d_H+k_2),\\
C_4=-(d+d_H+d_T+r_3),\\
C_5=-(d+d_H+\eta_1+\rho_1),\\
C_6=-(d+d_H+\eta_2+\rho_2),\\
C_7=-(d+ d_H+ \gamma d_H^T+\rho_1),\\
C_8=-(d+ d_H+ \alpha d_H^T+\rho_2).$ 
\newline
The characteristic equation of \eqref{eq:jmain} is given by the following:
\begin{equation}
\begin{aligned}
& (x+d+d_H+k_2)(x+d+d_T+r_1)(x+d+d_T+d_H+r_3)(x+d+ d_H + \gamma d_H^T + \rho_1)\\
& (x+d)(x+d+d_H)(x+d+ d_H+\rho_1+\eta_1)(x+d+ d_H+ \alpha d_H^T+\rho_2)(x+d+d_H+\rho_2+\eta_2)\\
& \bigg(x+d+d_H+r_2-\frac{\Lambda \lambda \sigma}{N d}\bigg) \bigg(x^2+ (2 d + r + d_T + k_1)x+d^2+dr+dd_T+dk_1+rk_1\\
& + d_Tk_1-\frac{\beta e \Lambda k_1}{Nd}\bigg)=0.
\end{aligned}
\end{equation}
Clearly, the first nine factors in (5.5) give eigenvalues with negative real parts. Eigenvalue of last two factors would have negative real parts if $\mathcal{R}_0^H<1$ and $\mathcal{R}_0^T<1$ respectively by using Routh-Hurwitz Stability criterion.\\
Since all the coefficients of the quadratic factors are positive, by Routh-Hurwitz criterion the disease free equilibrium is locally asymptotically stable for $\mathcal{R}_0<1$ and unstable for $\mathcal{R}_0>1$.
\end{proof}
\end{thm}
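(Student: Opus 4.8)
The plan is to prove local asymptotic stability by analysing the spectrum of the Jacobian $J(E_0)$ recorded in \eqref{eq:jmain}: I would show that, when $\mathcal{R}_0<1$, every eigenvalue has negative real part, and that when $\mathcal{R}_0>1$ at least one eigenvalue has positive real part. The route is not to expand the $12\times 12$ characteristic determinant head-on, but to exploit the sparse, essentially block-triangular structure of $J(E_0)$ so that $\det\!\bigl(J(E_0)-xI\bigr)$ factors into pieces whose roots can be read off directly.

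First I would pass to the directed influence graph of the linearisation and check that, after collapsing the only genuinely two-way coupled pair $(T_L,T_I)$ into a single block, this graph is acyclic; equivalently, a suitable permutation brings $J(E_0)$ to block-triangular form. The susceptible perturbation feeds no other equation (its column carries only the diagonal entry $-d$, since every infection term is bilinear and so has vanishing $S$-derivative at $E_0$), and the remaining infected classes arrange into the cascades $T_L\to T_I\to T_T\to S$ on the TB side and $H_L\to C\to\{C_1,C_2\}\to\{C_1^T,C_2^T\}\to H^T$ on the HIV/co-infection side, with $H$ fed by $H_L,C,C_1,C_2$ and in turn feeding only $S$. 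Consequently the characteristic polynomial is the product of the diagonal blocks, which peels off nine linear factors, namely $(x+d)$, $(x+d+d_H)$, $(x+d+d_T+r_1)$, $(x+d+d_H+k_2)$, $(x+d+d_T+d_H+r_3)$, $(x+d+d_H+\rho_1+\eta_1)$, $(x+d+d_H+\rho_2+\eta_2)$, $(x+d+d_H+\gamma d_H^T+\rho_1)$ and $(x+d+d_H+\alpha d_H^T+\rho_2)$. Each has the form $(x+a)$ with $a>0$, so all nine contribute eigenvalues with strictly negative real part irrespective of the reproduction numbers.

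What remains is the genuinely infective part. The $H$ equation supplies the single linear factor $\bigl(x+d+d_H+r_2-\tfrac{\Lambda\lambda\sigma}{N d}\bigr)$, whose root is negative exactly when $\tfrac{\Lambda\lambda\sigma}{N d(d+d_H+r_2)}<1$, i.e. when $\mathcal{R}_0^H<1$; while the coupled $(T_L,T_I)$ block supplies the quadratic $x^2+(2d+r+d_T+k_1)x+\bigl(d^2+dr+dd_T+dk_1+rk_1+d_Tk_1-\tfrac{\beta e\Lambda k_1}{N d}\bigr)$, whose linear coefficient is automatically positive, so the Routh--Hurwitz test for a quadratic collapses to positivity of the constant term, equivalent to $\mathcal{R}_0^T<1$. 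Assembling these, all twelve eigenvalues have negative real part precisely when both $\mathcal{R}_0^T<1$ and $\mathcal{R}_0^H<1$, that is, when $\mathcal{R}_0=\max\{\mathcal{R}_0^T,\mathcal{R}_0^H\}<1$; and if $\mathcal{R}_0>1$ then one of these two numbers exceeds unity, forcing either the HIV linear factor or the TB quadratic to have a root with positive real part, so $E_0$ is unstable. I expect the only delicate step to be the determinant bookkeeping: verifying that the chosen reordering genuinely renders $J(E_0)$ block-triangular, so that the off-diagonal inflow terms (the feedback of $T_T$ into $S$ via $r_1$, the $\rho_2$ transfer from the late-phase co-infection classes back to the HIV classes, and the $\eta$- and $\rho_1$-couplings among the co-infection treatment compartments) all sit strictly on one side of the diagonal and never obstruct the clean factorisation above.
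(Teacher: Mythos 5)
Your proposal is correct and follows essentially the same route as the paper: both compute the Jacobian $J(E_0)$, factor its characteristic polynomial into nine linear factors with manifestly negative roots, a linear factor governed by $\mathcal{R}_0^H$, and a quadratic in the $(T_L,T_I)$ block governed by $\mathcal{R}_0^T$ via Routh--Hurwitz, concluding stability for $\mathcal{R}_0<1$ and instability for $\mathcal{R}_0>1$. Your only addition is an explicit justification (via the acyclic influence graph, i.e.\ block-triangularity) of why the determinant factors so cleanly, a step the paper asserts without comment.
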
 
\subsection{\textbf{The endemic equilibria and their stability }} 
In this section, we discuss the various endemic equilibria and their stability. Biologically, there can be three endemic equilibria that are TB endemic, HIV endemic and an equilibrium point where both the diseases are endemic.
\subsubsection{\textbf{TB endemic equilibrium and stability}} 
The TB endemic is given by
\begin{equation*}
 E_T=(\hat{S},\; \hat{T_L},\; \hat{T_I},\; 0,\; 0,\; 0,\; 0,\; 0,\; 0,\; 0,\; 0,\; \hat{T_T}).
\end{equation*}
 One condition for existence of TB endemic equilibrium can be shown as in \eqref{subsection:3.2} because for TB to be endemic $\lambda_T^*>0$ which is given by \eqref{eq:lmdt} and  results in $R_0^T>1$ and the second condition will be proved by using center manifold theory. The $\hat{S},\; \hat{T_L},\; \hat{T_I}$ and $\hat{T_T}$ are given as in \eqref{eq:b}.

We prove the local stability of the endemic equilibrium $E_T$, using the center manifold theory, as described in \cite[Theorem 4.1]{CB1}. We simplify the system \eqref{eq:main} to apply this method. Let $S=x_1,\; T_L= x_2,\; T_I=x_3,\; H=x_4,\; H_L= x_5,\; C=x_6,\; C_1= x_7,\; C_2 = x_8,\; C_1^T=x_9,\; C_2^T=x_{10},\; H^T=x_{11}$ and $T_T=x_{12}$, so that $N=x_1+x_2+x_3+x_4+x_5+x_6+x_7+x_8+x_9+x_{10}+x_{11}+x_{12}$. The model system \eqref{eq:main} can be written in the form $\frac{dX}{dt}=(f_1,\; f_2,\; f_3,\; f_4,\; f_5,\; f_6,\; f_7,\; f_8,\; f_9,\; f_{10},\; f_{11},\; f_{12})^T$ maintaining the sequence.

The basic reproduction number of the system \eqref{eq:main} is given by \ref{rp}. Now, we choose a bifurcation parameter $\beta^*$, by solving for $\mathcal{R}_0^T=1$, we get,
\begin{align*}
\beta^*=\frac{N d (d+r+d_T)(d+k_1)}{\Lambda k_1}.
\end{align*}
The Jacobian matrix of the linearized system of (5.7) evaluated at disease free equilibrium point $E_0$ of system \eqref{eq:main} denoted by $J(E_0)$ and evaluated at $\beta^*$, that is, $J(E_0)|_{\beta^*}$ has a zero eigenvalue which is simple and all other eigenvalues have negative real parts when $\mathcal{R}_0^H<1$.
Therefore, we can apply center manifold theory here. \newline
The Jacobian matrix $J(E_0)|_{\beta^*}$ has a right eigenvector, associated with zero eigenvalue given by $w=(w_1,\; w_2,\; w_3,\; w_4,\; w_5,\; w_6,\; w_7,\; w_8,\; w_9,\; w_{10},\; w_{11},\; w_{12})^T$, where $w_i=0$ for all $i$ except $i=1, 2, 3$ and $12$ which are as follows:
\begin{equation}
\begin{aligned}
 w_1=& \frac{r_1}{d}-\frac{(d+r+d_T)(d+k_1)(d+d_T+r_1)}{d r k_1}, \\
 w_2 = &\frac{(d+r+d_T)(d+d_T+r_1)}{r k_1},\\
 w_3 = & \frac{(d+d_T+r_1)}{r},\\
 w_{12} = & 1.
\end{aligned}
\end{equation}
The $J(E_0)|_{\beta^*}$ has a left eigenvector $v= (v_1,\; v_2,\; v_3,\; v_4,\; v_5,\; v_6,\; v_7,\; v_8,\; v_9,\; v_{10},\; v_{11},\; v_{12})$ associated with the zero eigenvalue, where
\begin{equation}
\begin{aligned}
& v_i = 0 \; \forall \; i \neq 2,3 ,\\
& v_2 = \frac{k_1}{(d+k_1)}, \\
& v_3 = 1.
\end{aligned}
\end{equation}
For determining the local stability near the bifurcation point $\beta^*=\beta e$, we need to determine the signs of the two associated constants, $a$ and $b$, defined by \eqref{eq:abl} with $\phi =\beta e - \beta^*$ and for $\beta e = \beta^*, \phi=0$. \newline
Hence, the associated non-zero partial derivatives at $E_0$ are
\begin{equation*}
\begin{aligned}[c]
\frac{\partial^2f_2}{\partial x_2 \partial x_3}&= -\frac{\beta^* d}{\Lambda},\\
\frac{\partial^2f_2}{\partial x_3 \partial x_{12}}&= -\frac{\beta^* d}{\Lambda},
\end{aligned}
\qquad
\begin{aligned}[c]
\frac{\partial^2 f_2}{\partial x_3^2}&= -2 \frac{ \beta^* d}{\Lambda},\\
\frac{\partial^2 f_2}{\partial x_3 \partial \beta^*}&=1.
\end{aligned}
\end{equation*}
From the above calculations, we get
\begin{equation}
\begin{aligned}\label{eq:e}
 a = & -\frac{(d+d_T+r)^2(d+r+d_T)^2 d}{\Lambda^2 r k_1} -\frac{2N d^2 (d+d_T+r_1)^2 (d+r+d_T)}{r^2 \Lambda^2} \\
& - \frac{N d^2(d+d_T+r_1)(d+d_T+r)}{r \Lambda^2},\\
 b = & \frac{k_1 (d+d_T+r_1)}{r(d+k_1)}.
\end{aligned}
\end{equation}

We conclude that $a<0$ and $b>0$. Thus, our calculations together with  \cite[Theorem 4.1]{CB1} implies that there exists a TB endemic equilibrium point of system \eqref{eq:main} when $\mathcal{R}_0^T>1$ and $\mathcal{R}_0^H<1$ and is locally asymptotically stable when $\beta^*<\beta e$ with $\beta e$ close to $\beta^*$ and system undergoes supercritical transcritical bifurcation. Hence, we get the next result.
\begin{thm}
The endemic equilibrium point $E_T$ exists for $\mathcal{R}_0^T>1$ and $\mathcal{R}_0^H<1$ and is locally asymptotically stable for $\mathcal{R}_0^T$ near 1 and system undergoes supercritical transcritical bifurcation at $\mathcal{R}_0^T=1$ whereas $\beta^*$ is the bifurcation parameter.
\end{thm}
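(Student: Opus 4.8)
The plan is to separate the claim into existence and stability, treating existence exactly as in the TB submodel of Subsection~\ref{subsection:3.2} and the stability/bifurcation via the Castillo--Chavez--Song center manifold theorem \cite[Theorem 4.1]{CB1}, with the preparatory computations already assembled in the discussion preceding the statement.

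For existence, I would first note that imposing $H=H_L=C=C_1=C_2=C_1^T=C_2^T=H^T=0$ collapses \eqref{eq:main} onto the TB submodel \eqref{eq:tb}, so the components of $E_T$ are exactly $\hat{S},\hat{T_L},\hat{T_I},\hat{T_T}$ from \eqref{eq:b}. Solving the steady-state relations for the force of infection reproduces \eqref{eq:lmdt}; its denominator is positive, so positivity of $\lambda_T^*$ is equivalent to positivity of the numerator, which reduces to $\mathcal{R}_0^T>1$. This yields the boundary equilibrium as a genuine point of the full system. The second hypothesis $\mathcal{R}_0^H<1$ is what guarantees that this boundary equilibrium is not destabilized along the HIV directions, and it is precisely the condition needed for the center manifold machinery to apply.

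For stability and the bifurcation structure I would apply \cite[Theorem 4.1]{CB1} with $\beta^*$, obtained by solving $\mathcal{R}_0^T=1$, as the bifurcation parameter. The steps, in order, are: first, confirm that $J(E_0)|_{\beta^*}$ has a \emph{simple} zero eigenvalue with every other eigenvalue in the open left half-plane. This follows by factoring the characteristic polynomial as in \eqref{eq:jmain}: the critical eigenvalue arises from the TB quadratic factor at $\beta e=\beta^*$, while the HIV factor $\bigl(x+d+d_H+r_2-\tfrac{\Lambda\lambda\sigma}{Nd}\bigr)$ stays strictly negative precisely because $\mathcal{R}_0^H<1$. Second, use the right and left null eigenvectors $w$ and $v$ recorded above, where $v$ is supported only on the $T_L,T_I$ coordinates and $w$ only on the $S,T_L,T_I,T_T$ coordinates. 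Third, evaluate the coefficients $a$ and $b$ from \eqref{eq:abl}. Although $v_2,v_3\neq0$, the second-order partials of $f_3$ that could survive all pair a TB coordinate with an HIV coordinate on which $w$ vanishes, so the sum for $a$ collapses to the second derivatives of $f_2$; the only nonzero ones are those listed before \eqref{eq:e}, coming from differentiating the incidence term $\beta e S(T_I+C)/N$ together with its $1/N$ dependence.

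The step I expect to require the most care is this third one: correctly enumerating the surviving second derivatives and assembling the weighted sum so that every term of $a$ is negative. The $1/N$ factor must be differentiated through $N=\sum_i x_i$, which is exactly what makes cross-partials such as $\partial^2 f_2/\partial x_3\partial x_{12}$ nonzero and forces them into the sum with the correct eigenvector weights; overlooking these would change the value, though not the sign, of $a$. Once $a<0$ is confirmed and $b>0$ is read off from $v_2,w_3>0$ together with $\partial^2 f_2/\partial x_3\partial\beta^*=1$, \cite[Theorem 4.1]{CB1} immediately yields that the transcritical bifurcation at $\mathcal{R}_0^T=1$ (equivalently $\beta e=\beta^*$) is supercritical and that the $E_T$ branch present for $\beta e>\beta^*$ is locally asymptotically stable for $\mathcal{R}_0^T$ near $1$.
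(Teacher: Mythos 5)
Your proposal follows essentially the same route as the paper: existence is obtained by reducing to the TB submodel analysis of Subsection~\ref{subsection:3.2} (with the equilibrium components from \eqref{eq:b} and $\lambda_T^*>0 \Leftrightarrow \mathcal{R}_0^T>1$), and stability/bifurcation is established via \cite[Theorem 4.1]{CB1} with bifurcation parameter $\beta^*$, the same eigenvectors $w$, $v$, the same surviving partial derivatives of $f_2$, and the same conclusion $a<0$, $b>0$. Your observation that the nonzero second partials of $f_3$ drop out because they pair TB coordinates with HIV coordinates on which $w$ vanishes is a point the paper leaves implicit, but it is a refinement of the same argument rather than a different one.
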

\subsubsection{\textbf{HIV endemic and its stability}}
The HIV endemic is given by
\begin{align*}
E_H=(\tilde{S},\; 0,\; 0,\; \tilde{H},\; \tilde{H}_L,\; 0,\; 0,\; 0,\; 0,\; 0,\; \tilde{H}^T,\; 0).
\end{align*}
One existence condition of HIV endemic equilibrium can be shown as in \eqref{thm:4.3} and the other condition will be proved by using center manifold theory. The $\tilde{S}, \tilde{H}, \tilde{H}_L$ and $\tilde{H}^T$ are given as in \eqref{eq:4.6}. Again we use the center manifold theory, as described in [refer \cite[Theorem 4.1]{CB1}. We choose a bifurcation parameter $\lambda^*$. By solving for $\mathcal{R}_0^T=1$, we get,
\begin{align*}
\lambda^*= \lambda \sigma =\frac{N d (d+d_H+r_2)}{\Lambda}.
\end{align*}
The Jacobian matrix $J(E_0)$ evaluated at $\lambda^*$ i.e. $J(E_0)|_{\lambda^*}$ has a simple zero eigenvalue for $\mathcal{R}_0^T<1$. Hence, center manifold theory can be applied here. Proceeding as in previous theorem  
we can easily calculate
\begin{equation}\label{eq:vab}
\begin{aligned}
& a = - \frac{(d+d_H+\eta_2+\rho_2)\lambda \sigma d(d+d_H)}{r_2 \Lambda(d+d_H+r_2+\rho_2)}\left(\frac{2(d+d_H)}{r_2}+1 \right),\\
& b = \frac{(d+d_H+\eta_2+\rho_2)(d+d_H)}{(d+d_H+r_2+\rho_2) r_2}.
\end{aligned}
\end{equation}
From \eqref{eq:vab}, we get $a<0$ and $b>0$. Thus, from our calculation and \cite[Theorem 4.1]{CB1}, there exists an HIV endemic $E_H$ of the model system \eqref{eq:main}, when $\mathcal{R}_0^T<1$ and $\mathcal{R}_0^H>1$ and is locally asymptotically stable when $\lambda^*< \lambda \sigma$ with $\lambda \sigma$ close to $\lambda^*$ and supercritical transcritical bifurcation occurs at $\lambda^*= \lambda \sigma$. Thus, we get the following result:
\begin{thm}
The endemic equilibrium point $E_H$ exists for $\mathcal{R}_0^T<1$ and $\mathcal{R}_0^H>1$ which is locally asymptotically stable for $\mathcal{R}_0^H$ near $1$ and supercritical transcritical bifurcation occurs at $\mathcal{R}_0^H=1$ and $\lambda^*$ acts as the bifurcation parameter.
\end{thm}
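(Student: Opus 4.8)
The plan is to prove the two assertions of the statement separately: existence of the HIV-only endemic equilibrium $E_H$ on the regime $\mathcal{R}_0^T<1,\ \mathcal{R}_0^H>1$, and its local asymptotic stability together with the supercritical (forward) nature of the bifurcation at $\mathcal{R}_0^H=1$. For existence I would argue as in Theorem~\ref{thm:4.3}: the set where $T_L=T_I=T_T=C=C_1=C_2=C_1^{T}=C_2^{T}=0$ is invariant for \eqref{eq:main}, and on it $H_L$ also decays to zero (its only inflows $\lambda\sigma T_L(\cdots)/N$ and $\beta e H(T_I+C)/N$ both vanish), so that the twelve-dimensional system collapses onto the HIV submodel \eqref{eq:hiv}. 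Its nontrivial equilibrium \eqref{eq:4.6} exists exactly when the steady-state force of infection $\lambda_H^{*}=\lambda\sigma\tilde H/N$ is positive, i.e.\ when $\mathcal{R}_0^H>1$; imposing $\mathcal{R}_0^T<1$ is what stops the TB coordinates from reactivating, so that $E_H$ is a genuine boundary equilibrium of \eqref{eq:main} and not the interior one.

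For stability and the bifurcation I would invoke the Castillo-Chavez--Song center manifold theorem \cite[Theorem 4.1]{CB1}, mirroring the treatment of $E_T$. After relabelling $(S,T_L,\dots,T_T)=(x_1,\dots,x_{12})$ and writing \eqref{eq:main} as $\dot X=(f_1,\dots,f_{12})^{T}$, I would pick the bifurcation parameter by solving $\mathcal{R}_0^H=1$ for the product $\lambda\sigma$, giving $\lambda^{*}=Nd(d+d_H+r_2)/\Lambda$. The key preliminary is the spectral hypothesis: at $\lambda\sigma=\lambda^{*}$ the factor $x+d+d_H+r_2-\Lambda\lambda\sigma/(Nd)$ of the disease-free characteristic polynomial vanishes, producing a simple zero eigenvalue, while the assumption $\mathcal{R}_0^T<1$ guarantees that the remaining TB quadratic factor and every linear factor have roots with negative real part; hence the center manifold reduction applies.

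Next I would solve $J(E_0)|_{\lambda^{*}}\,w=0$ and $v\,J(E_0)|_{\lambda^{*}}=0$. Because the TB block is invertible when $\mathcal{R}_0^T<1$ and the co-infection compartments form a downstream chain of sinks, I expect the right eigenvector $w$ to be supported only on $S,\ H$ and $H^{T}$ (the TB and co-infection entries forced to zero), and the left eigenvector $v$ to be supported only on the HIV-infected compartments $H,\ H_L,\ C,\ C_1,\ C_2$ that feed $\lambda_H$. This sparsity is what makes the bifurcation constants \eqref{eq:abl} tractable: the only second-order partials of the $f_k$ surviving at $E_0$ come from the bilinear incidence $\lambda\sigma\, x_1(H+H_L+C+C_1+C_2)/N$ in the $H$-equation, contributing the $-2\lambda\sigma d/\Lambda$ and $-\lambda\sigma d/\Lambda$ entries to $a$ and a unit mixed $x$-$\phi$ derivative to $b$ (with $\phi=\lambda\sigma-\lambda^{*}$). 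Pairing these against $v$ and $w$ should reproduce \eqref{eq:vab}, from which $a<0$ and $b>0$ are immediate since every displayed factor is a sum of positive rates; then \cite[Theorem 4.1]{CB1} yields that the transcritical bifurcation at $\mathcal{R}_0^H=1$ is forward/supercritical and that the emerging branch, present for $\mathcal{R}_0^H>1$, is locally asymptotically stable for $\mathcal{R}_0^H$ near $1$, with $\lambda^{*}$ as the bifurcation parameter.

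I anticipate the main obstacle to be organizational rather than conceptual: correctly pinning down the support and the entries of the twelve-component vectors $v$ and $w$, and then enumerating which of the triple products $v_k w_i w_j\,\partial^2 f_k/\partial x_i\partial x_j$ actually survive, since a single misassigned index---for instance confusing the $C_2^{T}\to H^{T}$ feedback $\rho_2$ with the other $\rho_2$ couplings, or dropping the $r_2H$ term in the $H^{T}$-row of the Jacobian---would corrupt the coefficient, the precise value of $a$ being delicate even though its sign is robust. Keeping the $1/N$ normalisation consistent when differentiating the incidence terms twice is the other place where sign errors naturally creep in.
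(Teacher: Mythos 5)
Your proposal is correct and follows essentially the same route as the paper: existence is inherited from the HIV submodel argument of Theorem \ref{thm:4.3}, and local stability together with the forward bifurcation at $\mathcal{R}_0^H=1$ come from the Castillo-Chavez--Song theorem \cite[Theorem 4.1]{CB1} with bifurcation parameter $\lambda^*=Nd(d+d_H+r_2)/\Lambda$, simple zero eigenvalue guaranteed by $\mathcal{R}_0^T<1$, and bifurcation constants $a<0$, $b>0$ as in \eqref{eq:vab}. Your predicted eigenvector supports (right eigenvector on $S$, $H$, $H^T$; left eigenvector on $H$, $H_L$, $C$, $C_1$, $C_2$) are exactly the ones implicit in the paper's terse ``proceeding as in previous theorem'' computation, so the two arguments coincide.
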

\subsubsection{\textbf{Interior endemic equilibrium}}

The interior equilibrium point of system \eqref{eq:main} exists when both the diseases are present in the population. For both the diseases to be endemic, force of infection $\lambda_T$ and $\lambda_H$ should be positive and given by \eqref{eq:lt} and \eqref{eq:lh} respectively.
It is given by $E_T^H=(\grave{S},\; \grave{T}_L,\; \grave{T}_I,\; \grave{H},\; \grave{H_L},\; \grave{C},\; \grave{C}_1,\; \grave{C}_2,\; \grave{C}_1^T,\; \grave{C}_2^T,\; \grave{H}^T,\; \grave{T}_T)$. 

\subsubsection{\textbf{Summary of the equilibrium points}} 
Table \ref{table:1} summarizes the existence and stability conditions on the different equilibrium points of the model system \eqref{eq:main}.
\begin{table}[ht]
\caption{Equilibrium Points and Their Existence and Stability Conditions}
\centering
\begin{tabular}{l c c}
\hline \hline
Equilibrium Point & Existence Conditions & Stability \\ [0.5ex]
\hline 
Disease free equilibrium & Always exists & l.a.s for $R_0<1$\\
TB endemic equilibrium & $R_0^T>1$ and $R_0^H<1$ & l.a.s when $R_0^T>1$ and $R_0^H<1$\\
HIV endemic equilibrium & $R_0^T<1$ and $R_0^H>1$ & l.a.s when $R_0^T<1$ and $R_0^H>1$\\
Interior equilibrium & $R_0^T>1$ and $R_0^H>1$ \\
\hline
\end{tabular}
\label{table:1}
\end{table}

\section{\textbf{Numerical results and discussion}}\label{Section6}
In the present section, numerical simulations are carried out using various set of parameters. The numerical values of the parameters are given in Table \ref{table:2} and time is set to 50 years. We use MATLAB for the numerical simulations of the system \eqref{eq:main}. 
\begin{table}
\caption{Model Parameters}
\centering
\scalebox{0.95}{
\begin{tabular}{l c l c}
\hline \hline
Parameter & Symbol & Estimate & Source \\ [0.5ex]
\hline 
Recruitment Rate & $\Lambda$ & 280 & assumed\\
Natural death rate & $d$ & 0.01401 & estimated\\
TB induced death rate & $d_T$ & 0.1 & \cite{FengChavez,treat} \\
HIV induced death rate & $d_H$ & 0.2 & \cite{FengChavez}\\
IRIS induced death rate & $d_H^T$ & 0.33 & estimated\\
Progression rate from latent to active TB with no HIV & $k_1$ & $0.5$ & \cite{FengChavez, treat}\\
Progression rate from latent to active TB with HIV & $k_2$ & $1.3k_1$ & \cite{Silva}\\
Transition rate of TB treatment from early to late phase & $\rho_1$ & $5.56\times {10}^{-3}$ & \cite{mallela}\\
Transition rate of TB treatment from late to completion phase & $\rho_2$ & $1.11\times{10}^{-2}$ & \cite{mallela}\\
Recovery rate from TB with no HIV & $r_1$ & 0.82 & estimated\\
Per capita HIV treatment rate with no TB & $r_2$ & 0.33 & \cite{Bhunnu}\\
Per capita TB treatment rate in co-infected individuals & $r_3$ & 0.1 & \cite{mallela}\\
Per-capita TB treatment rate with no HIV & $r$ & 0.55 & \cite{Bhunnu}\\
HIV early treatment rate & $\eta_1$ & $0-0.05$ & \cite{mallela}\\
HIV late treatment rate & $\eta_2$ & $0-0.05$ & \cite{mallela}\\
Rate of occurrence of IRIS & $\gamma$ & $1 \times {10}^{-3}$ & \cite{mallela}\\ 
\hline
\end{tabular}}
\label{table:2}
\end{table}

For the numerical analysis, we use $N(0)= 20,000,\; S(0)= 12000,\; T_L(0)= 5000,\; T(I)= 1032,\;  H(0)= 340,\; H_L(0)= 113,\; C(0)= 114,\; C_1(0)= 64,\; C_2(0)= 64,\; C_1^T(0)= 32,\; C_2^T(0)= 32,\; H^T(0)= 265,\; T_T(0)= 944$ as the initial conditions. For initial conditions, it is assumed that more than half of the total population belong to the susceptible. One quarter of the total population is infected with latent TB \cite{whotb}. The population infected with HIV only is assumed to be $1.7\%$ and $78\%$ of them get proper treatment \cite{TBhiv} and $11\%$ of TB active people get co-infected with HIV. The remaining values are estimated assuming we are in controlled situation.

The natural death rate $d$ corresponds to the life expectancy of 71.4 years \cite{le} and $k_2>k_1$ implies that progression of TB is faster in co-infected individuals. It can be seen that $\beta$ and $e$ always appear together and $\beta e$ determines the TB reproduction number $\mathcal{R}_0^T$. Similarly, $\lambda$ and $\sigma$ always appear together and the product $\lambda \sigma$ determines the HIV reproduction number $\mathcal{R}_0^H$. In our calculations, we have fixed $\eta_1=0.03$ and $\eta_2=0.02$ from the given range in Table \ref{table:2} and for the Figure 4, we variate the values within range. We choose different values of $\beta e$ and $\lambda\sigma$ for our numerical simulations which are $\beta e=0.5$ for $\mathcal{R}_0^T<1$, $\beta e=2$ for $\mathcal{R}_0^T>1$, $\lambda \sigma = 0.3 $ for $\mathcal{R}_0^H<1$ and $\lambda \sigma=1$ for $\mathcal{R}_0^H>1$ resulting $\mathcal{R}_0^T=0.73$, $\mathcal{R}_0^T=2.93$, $\mathcal{R}_0^H=0.5515$ and $\mathcal{R}_0^H=1.84$, respectively.

\begin{figure}[ht]
\begin{center}
\subfloat[$\mathcal{R}_0<1$]{\includegraphics[scale=0.5]{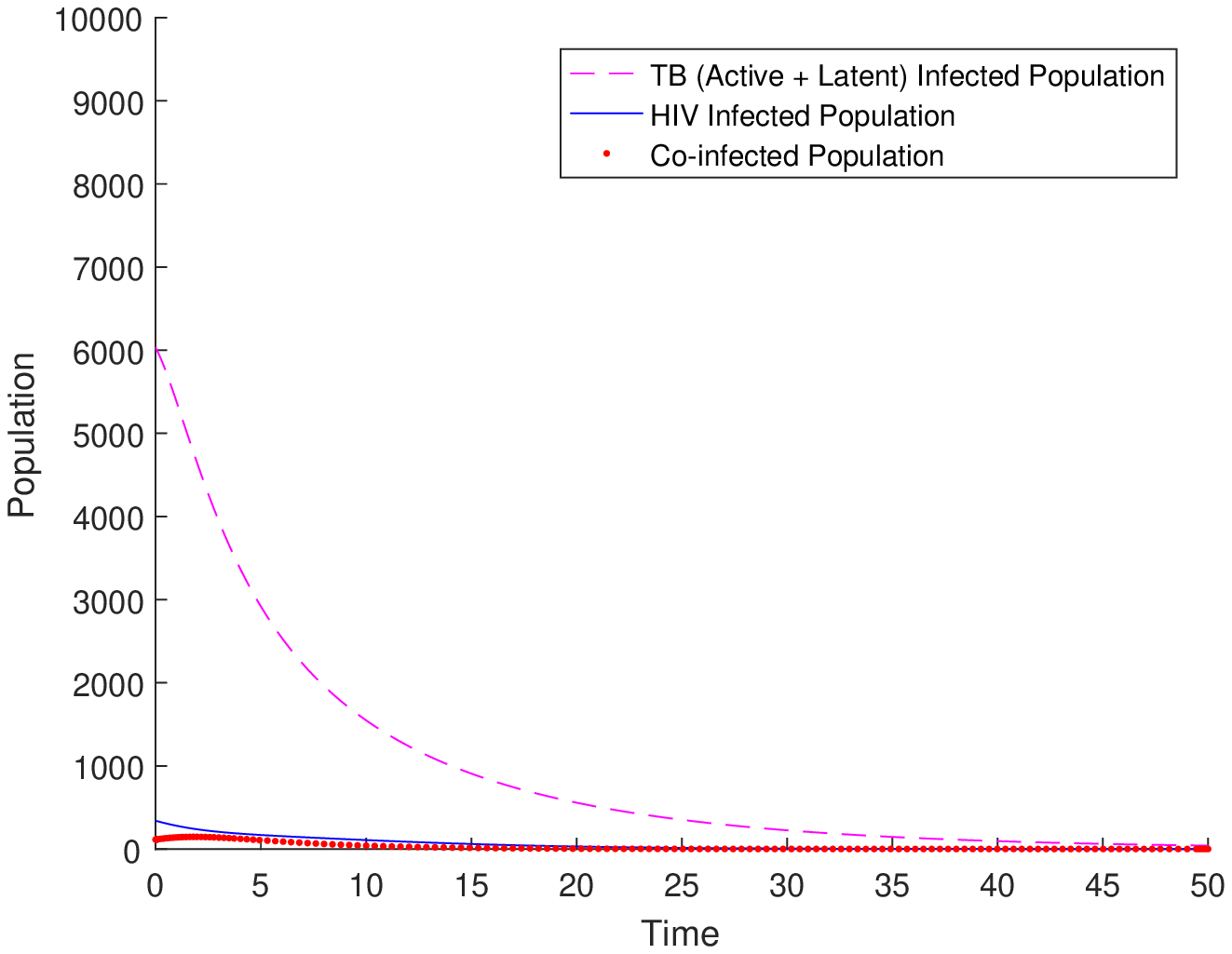}}
\subfloat[$\mathcal{R}_0^T<1$ and $\mathcal{R}_0^H>1$]{\includegraphics[scale=0.5]{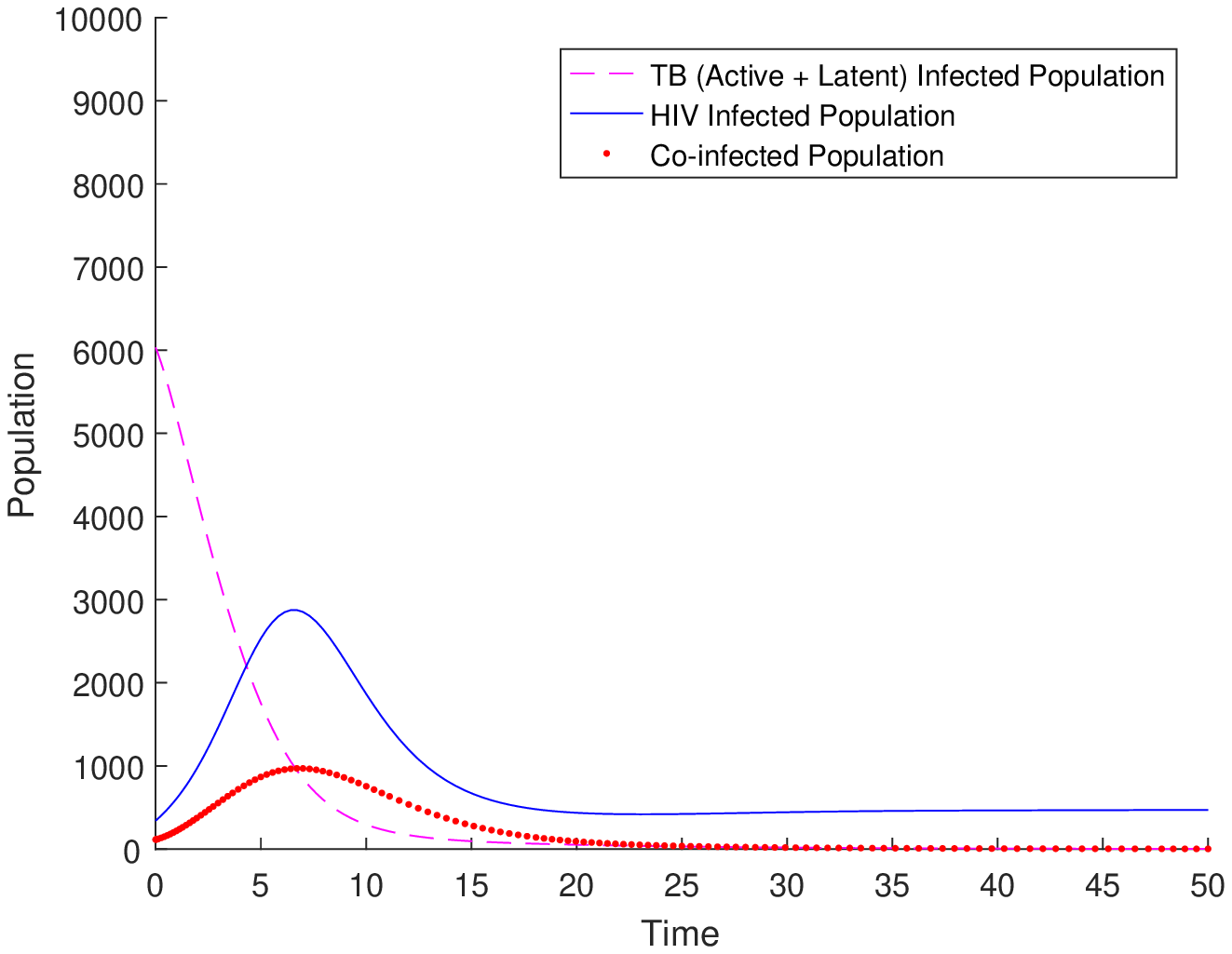}}
\hspace{0mm}
\subfloat[$\mathcal{R}_0^T>1$ and $\mathcal{R}_0^H<1$]{\includegraphics[scale=0.5]{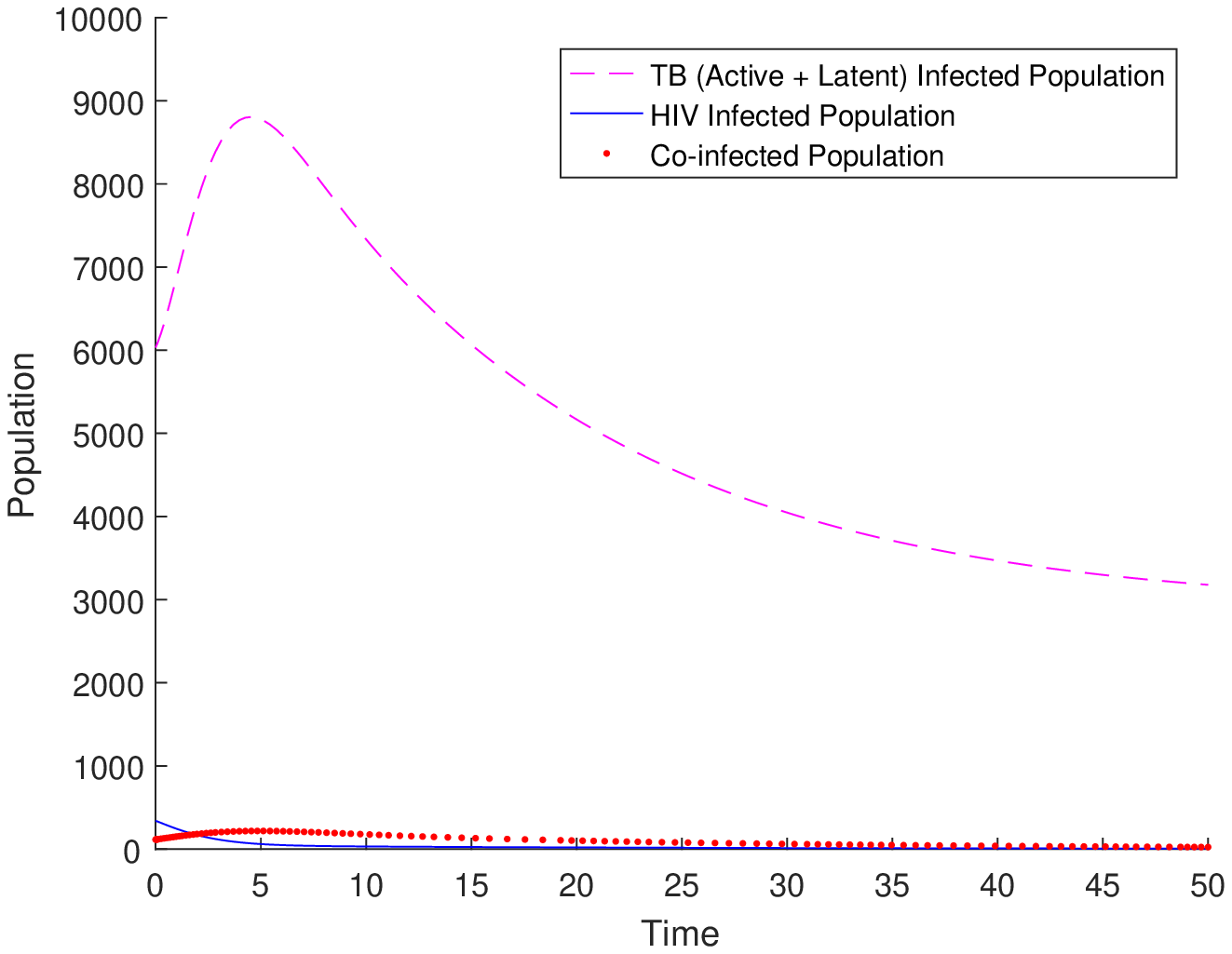}}
\subfloat[$\mathcal{R}_0^T>1$ and $\mathcal{R}_0^H>1$]{\includegraphics[scale=0.5]{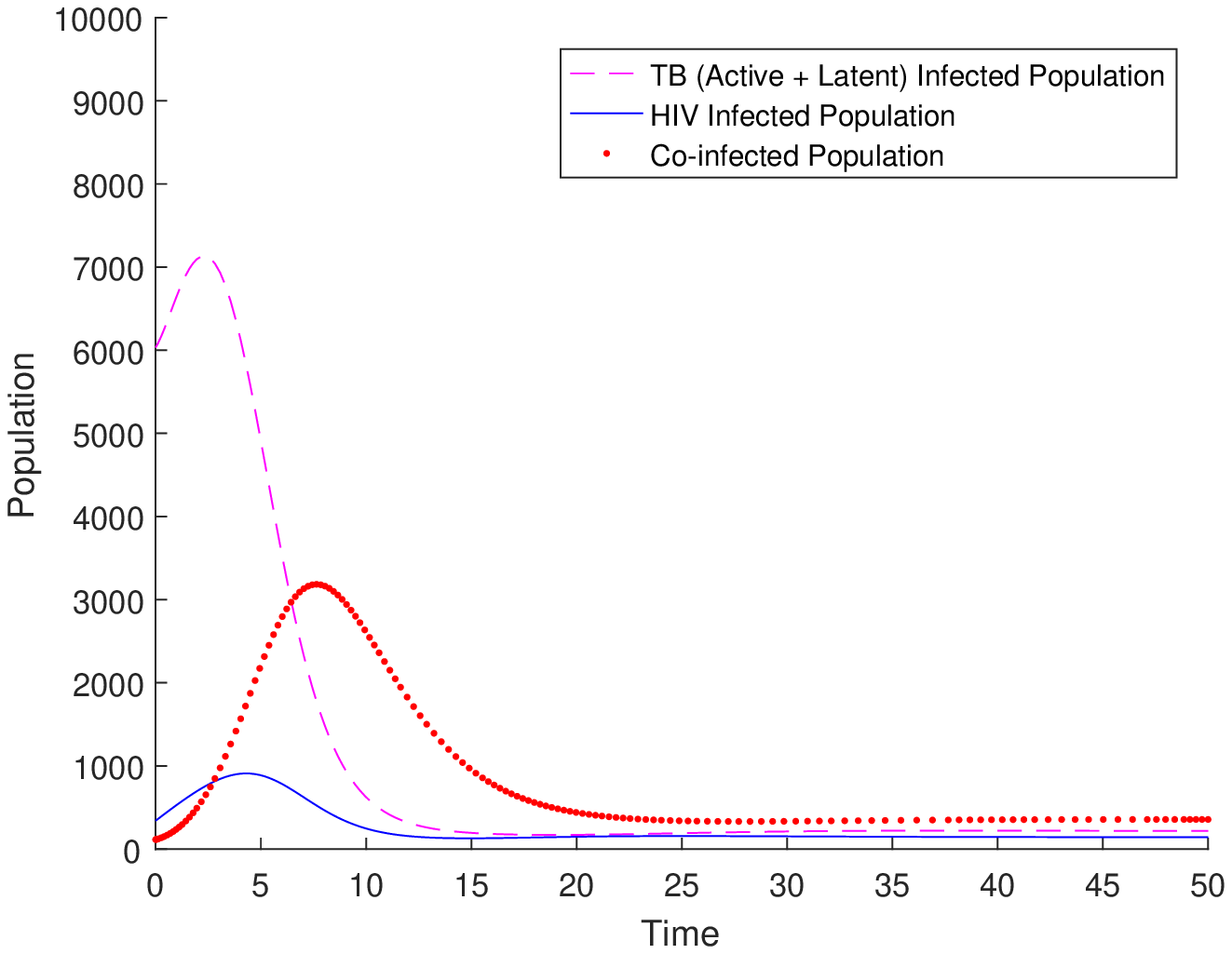}}
\hspace{0mm}
\end{center}
\caption{Effect of Reproduction Number on TB only infected population ($T_L+T_I$), HIV only infected population ($H$) and Co-infected population undergoing no treatment ($C$).}\label{fig2}
\end{figure}

 \begin{figure}
\begin{center}
\subfloat[$\mathcal{R}_0<1$]{\includegraphics[scale=0.5]{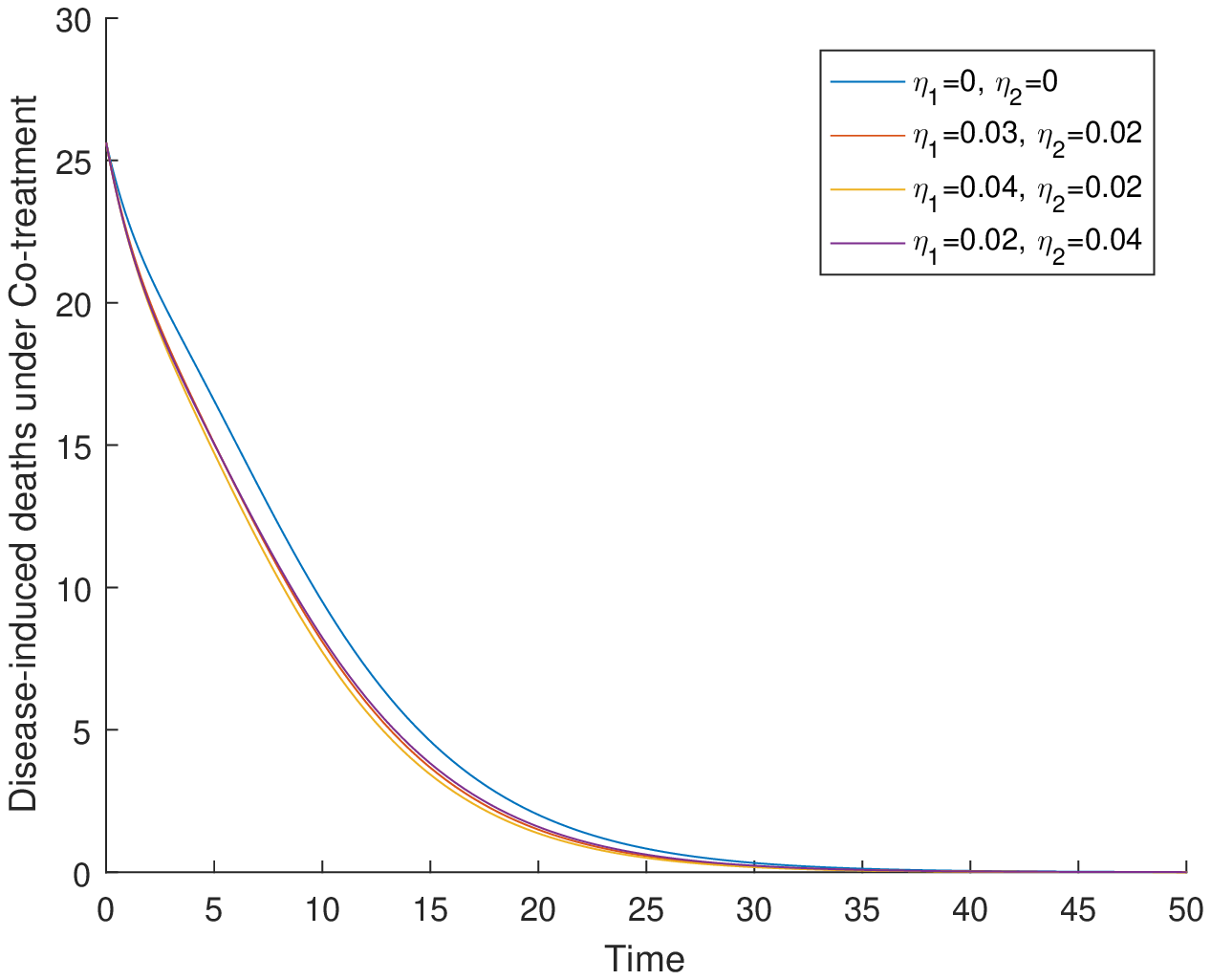}}
\subfloat[$\mathcal{R}_0^T<1$ and $\mathcal{R}_0^H>1$]{\includegraphics[scale=0.5]{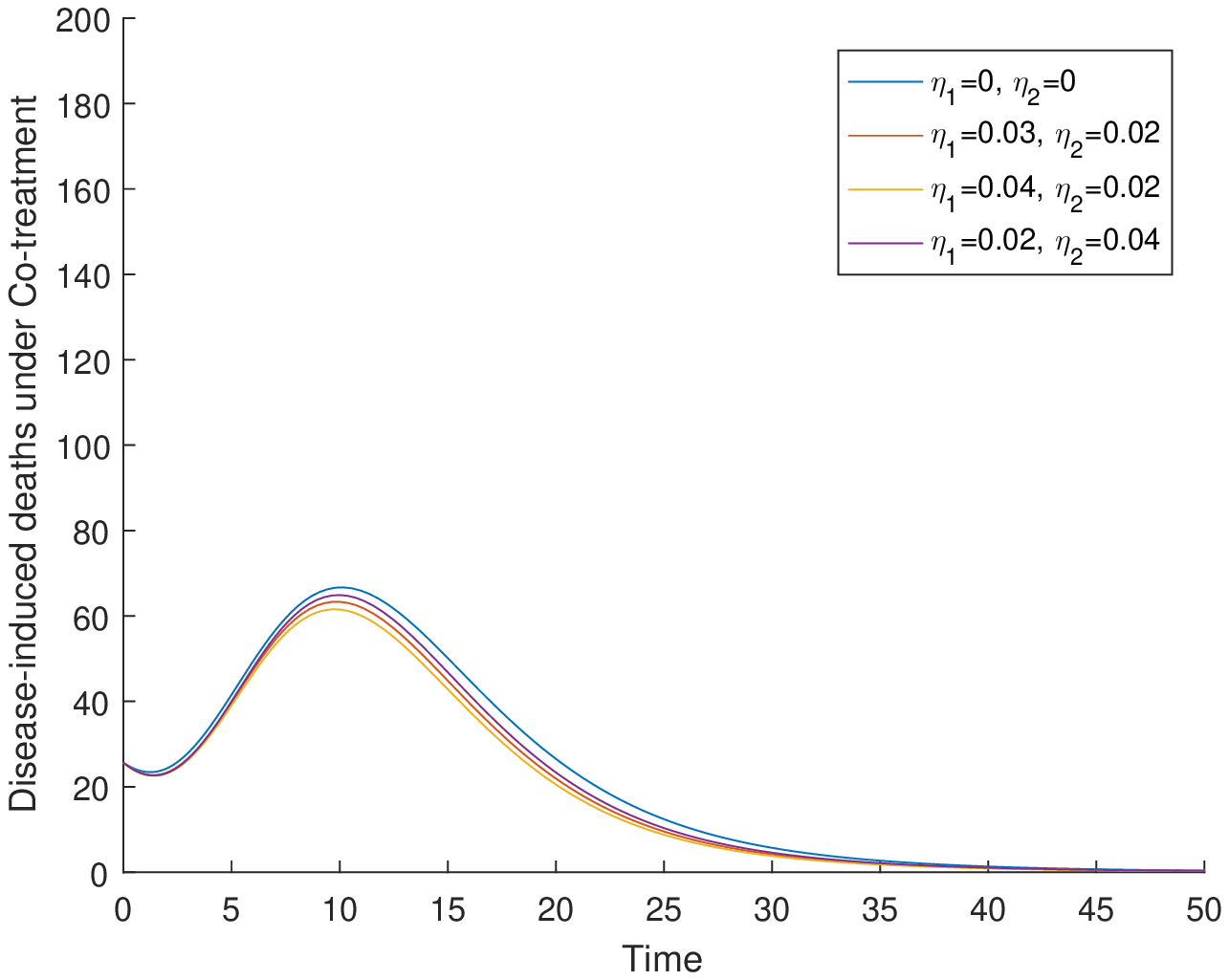}}
\hspace{0mm}
\subfloat[$\mathcal{R}_0^T>1$ and $\mathcal{R}_0^H<1$]{\includegraphics[scale=0.5]{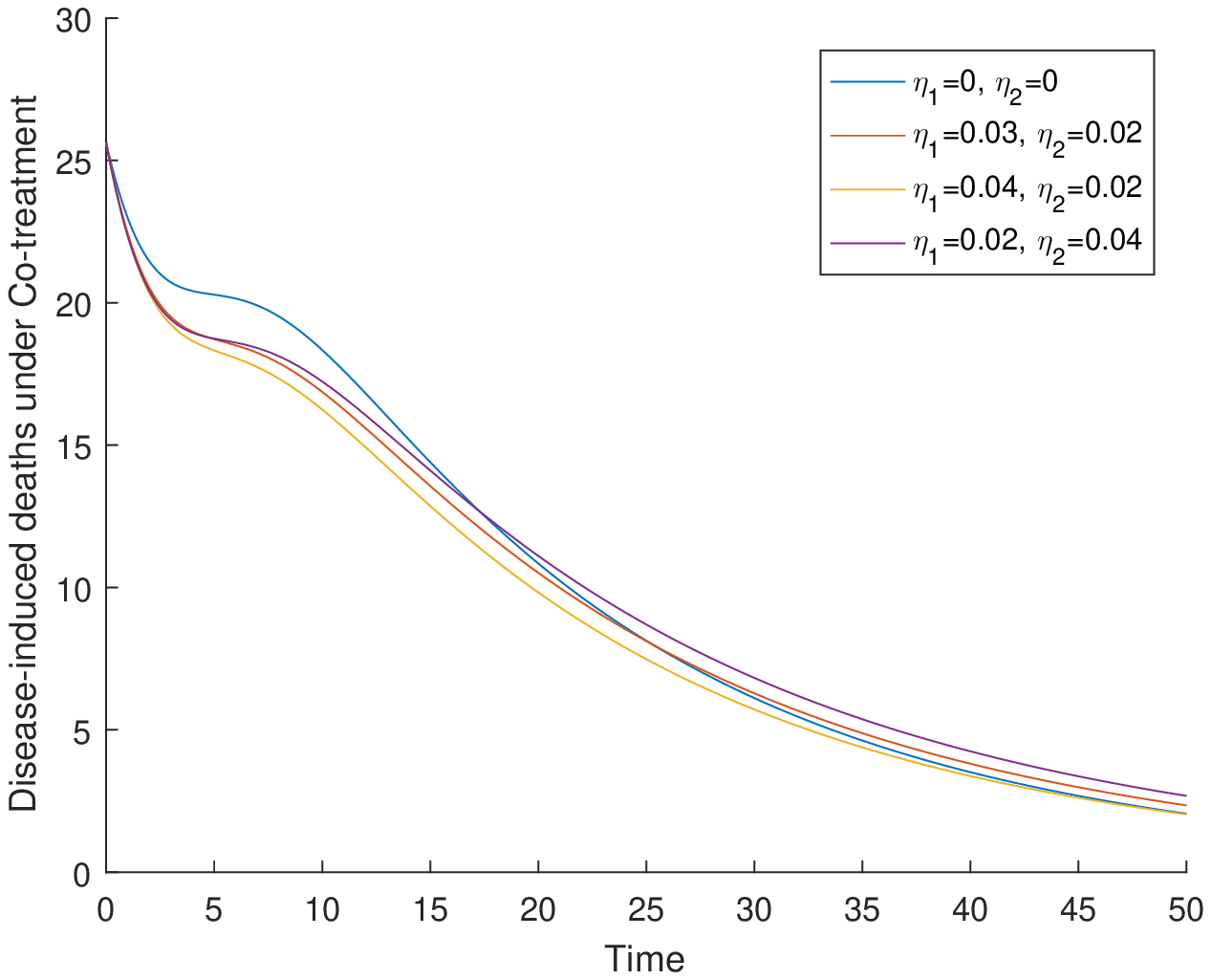}}
\subfloat[$\mathcal{R}_0^T>1$ and $\mathcal{R}_0^H>1$]{\includegraphics[scale=0.5]{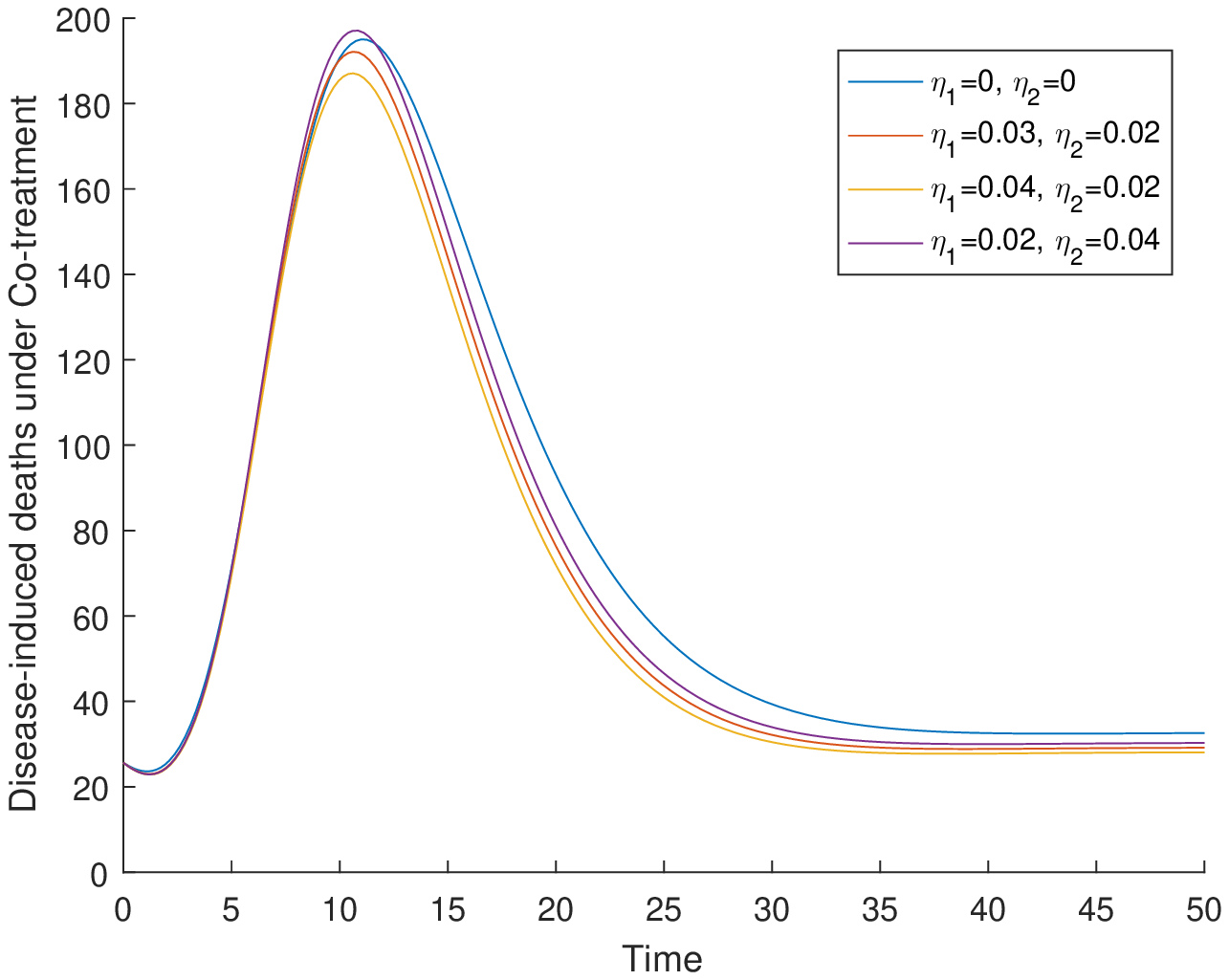}}
\hspace{0mm}
\end{center}
\caption{Effect of early or late Initiation of ART during TB treatment on disease-induced deaths of population under Co-treatment}\label{fig3}
\end{figure}
    
Figure \ref{fig2} depicts graphical representations of the change in population infected with single disease only and co-infected under no treatment regime, with change in reproduction number by plotting $T_L+T_I$, $H$ and $C$ versus time. For $\mathcal{R}_0^T<1$ and $\mathcal{R}_0^H<1$, Figure \ref{fig2} (A) shows that the diseases die out with time and approach the disease free equilibrium point $E_0$. This implies that for $\mathcal{R}_0<1$, diseases can not persist for longer duration of time. Figure \ref{fig2} (B) shows that for $\mathcal{R}_0^T<1$ and $\mathcal{R}_0^H>1$, TB infection decreases rapidly with time and finally vanishes, while HIV infected population first increases rapidly and then decreases before attaining a constant value which is $\tilde{H}$ of equilibrium point $E_H$. Co-infected population also increases very rapidly even when $\mathcal{R}_0^T<1$ and then decreases to become constant. Figure \ref{fig2} (C) shows that for $\mathcal{R}_0^T>1$ and $\mathcal{R}_0^H<1$, TB infected population first increases very rapidly and then slowly decreases to a constant value which is $\hat{T}_L+\hat{T}_I$, while HIV infected population vanishes very soon. Thus, the co-infected population also decreases with time and then vanishes. This corresponds to the equilibrium point $E_T$. Figure \ref{fig2} (D) shows that for $\mathcal{R}_0^T>1$ and $\mathcal{R}_0^H>1$, both the infections in population first increase to a maximum value and after that  decrease rapidly to attain constant values and these constant values correspond to the interior equilibrium point $E_T^H$. This represents that in favourable conditions, that is, $\mathcal{R}_0^T>1$ and $\mathcal{R}_0^H>1$, both the diseases favour each other and continue increasing rapidly and after reaching a maximum value, they again decrease to attain a constant value. This shows that no epidemic can last forever. Moreover, co-infected population is also maximum in this case.

\begin{figure}
\begin{center}
\subfloat[$\mathcal{R}_0<1$]{\includegraphics[scale=0.5]{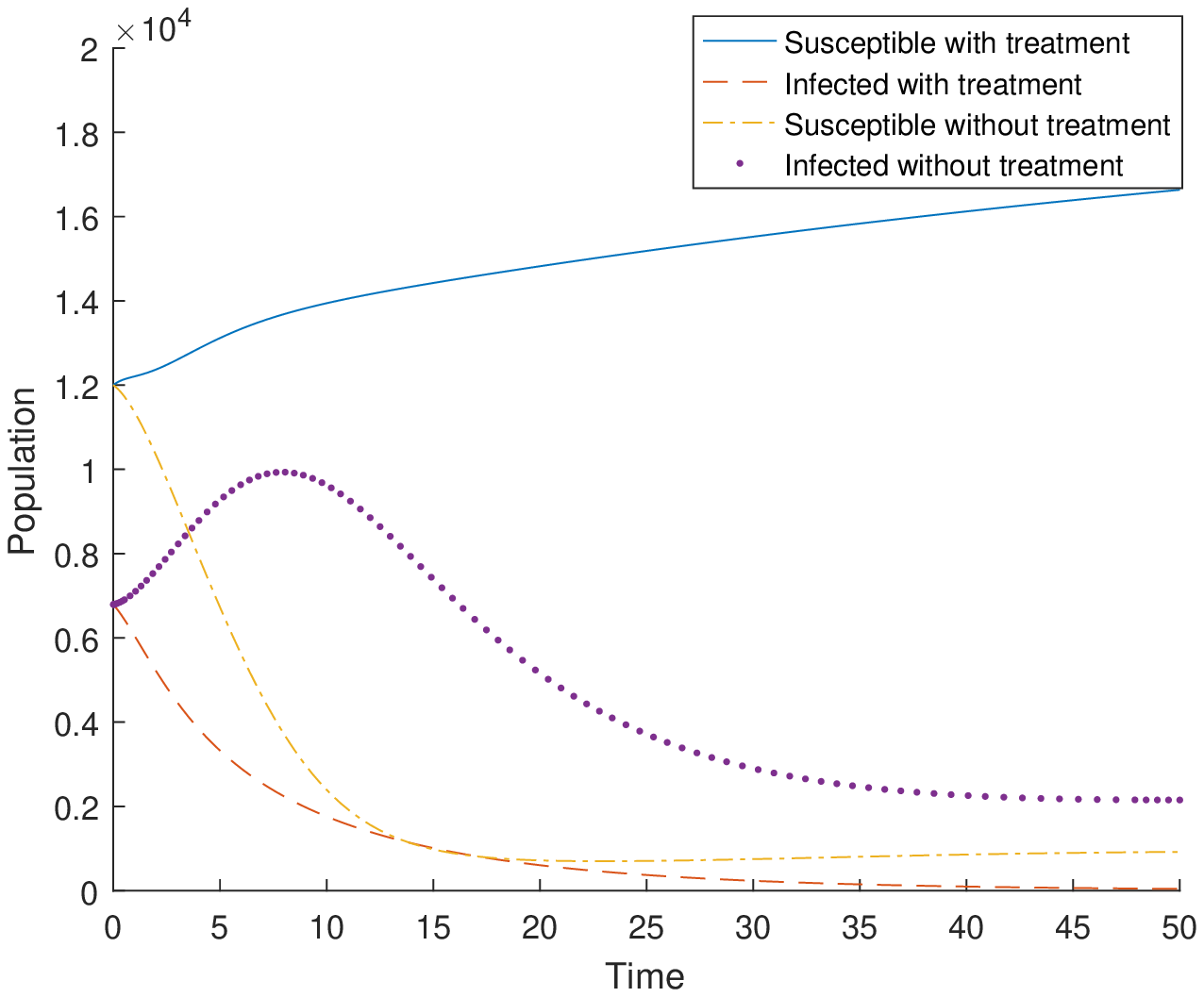}}
\subfloat[$\mathcal{R}_0^T<1$ and $\mathcal{R}_0^H>1$]{\includegraphics[scale=0.5]{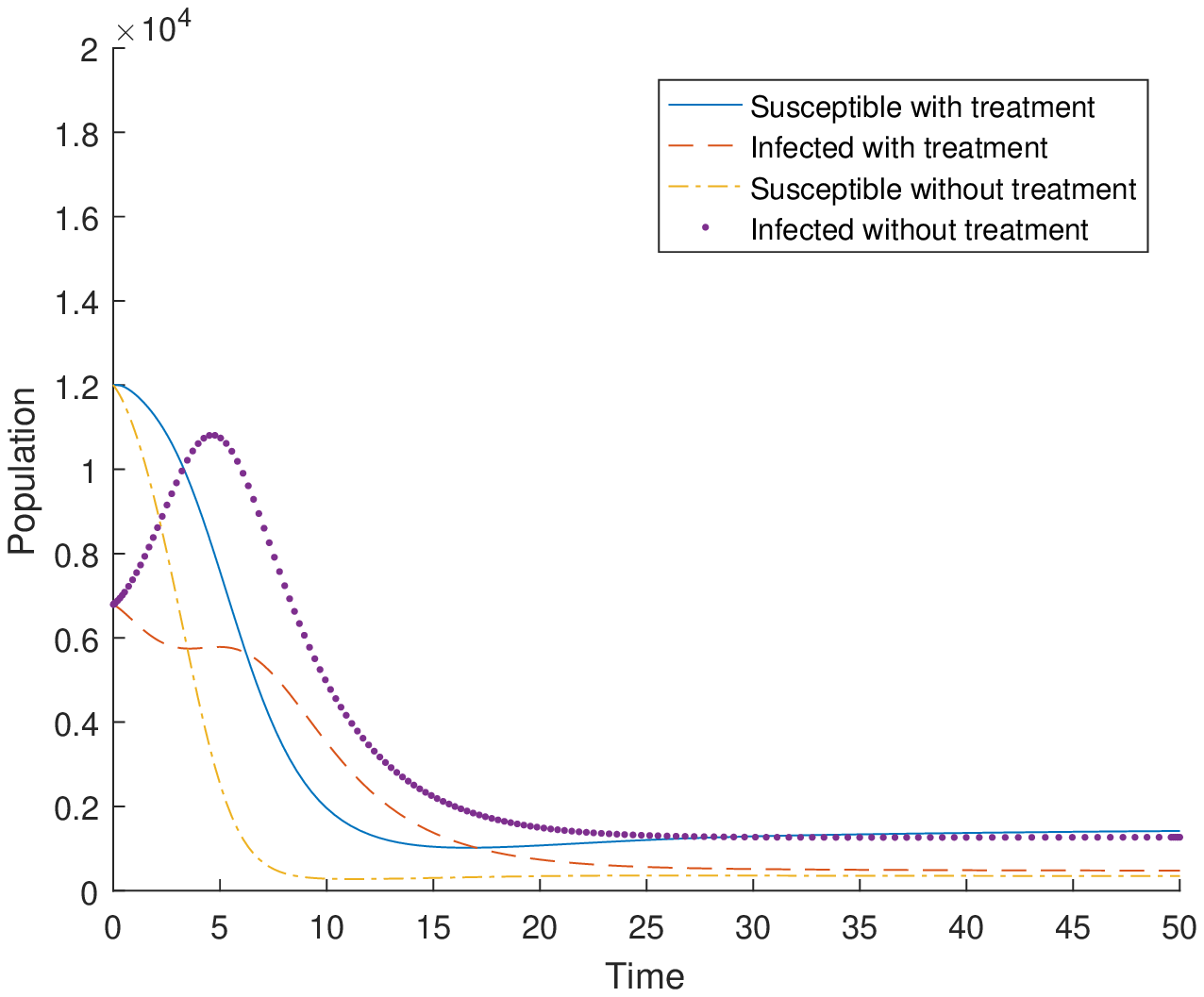}}
\hspace{0mm}
\subfloat[$\mathcal{R}_0^T>1$ and $\mathcal{R}_0^H<1$]{\includegraphics[scale=0.5]{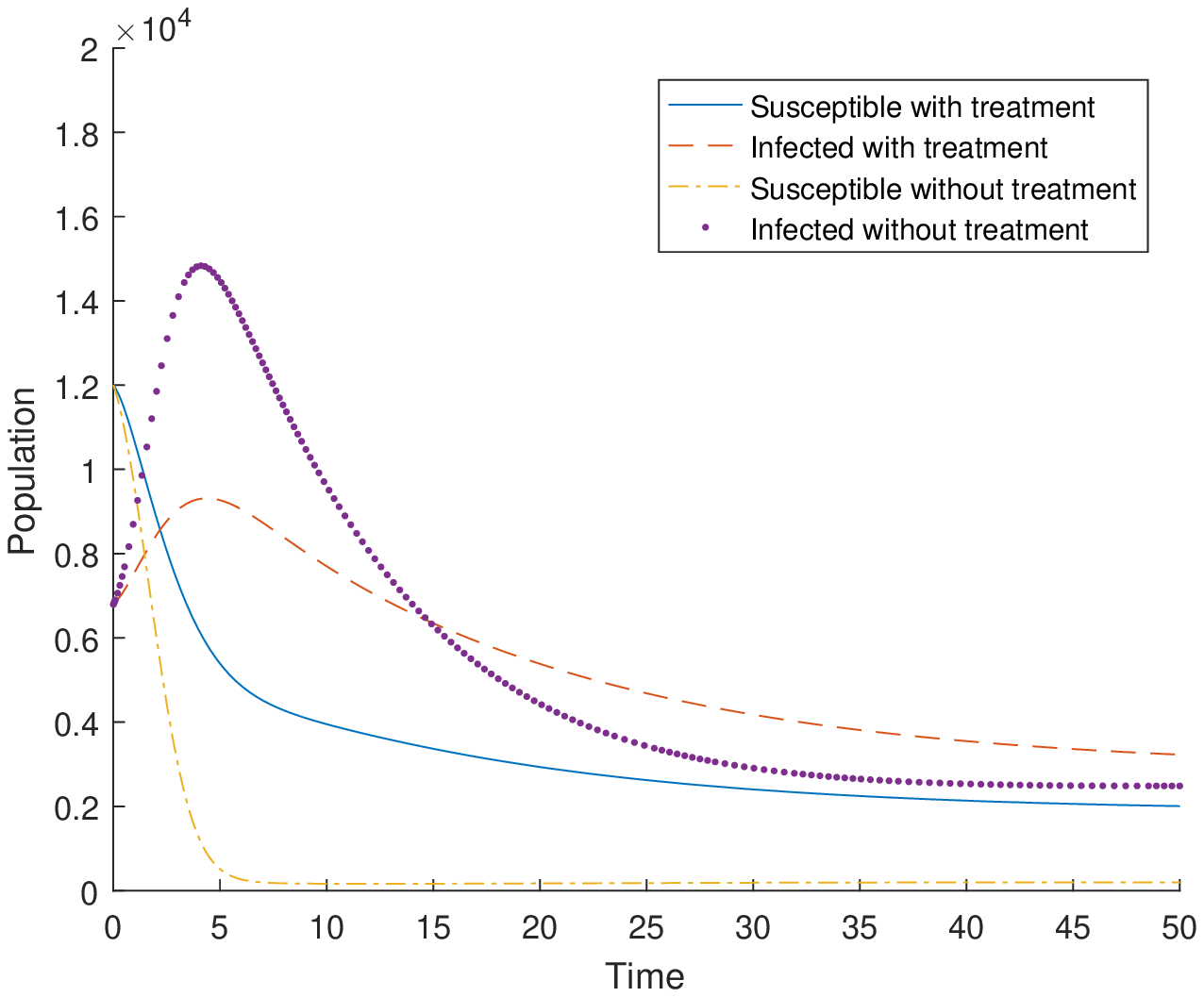}}
\subfloat[$\mathcal{R}_0^T>1$ and $\mathcal{R}_0^H>1$]{\includegraphics[scale=0.5]{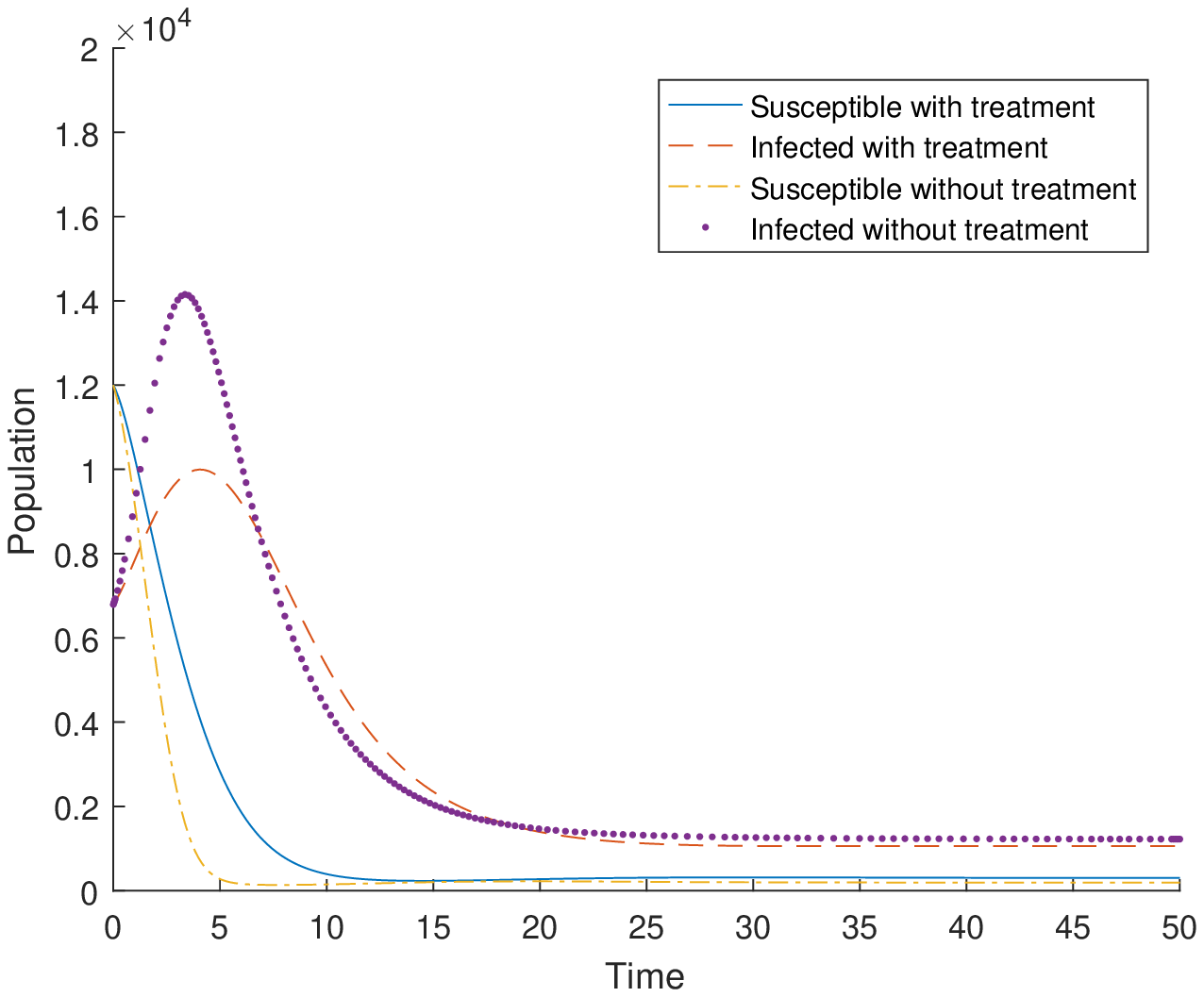}}
\hspace{0mm}
\end{center}
\caption{Effect of single disease infection treatment on the infected population ($T_L+T_I+H+H_L+C+C_1+C_2+C_1^T+C_2^T$) and susceptible ($S$).}\label{fig4}
\end{figure}

\begin{figure}[ht]
\begin{center}
{\includegraphics[scale=0.5]{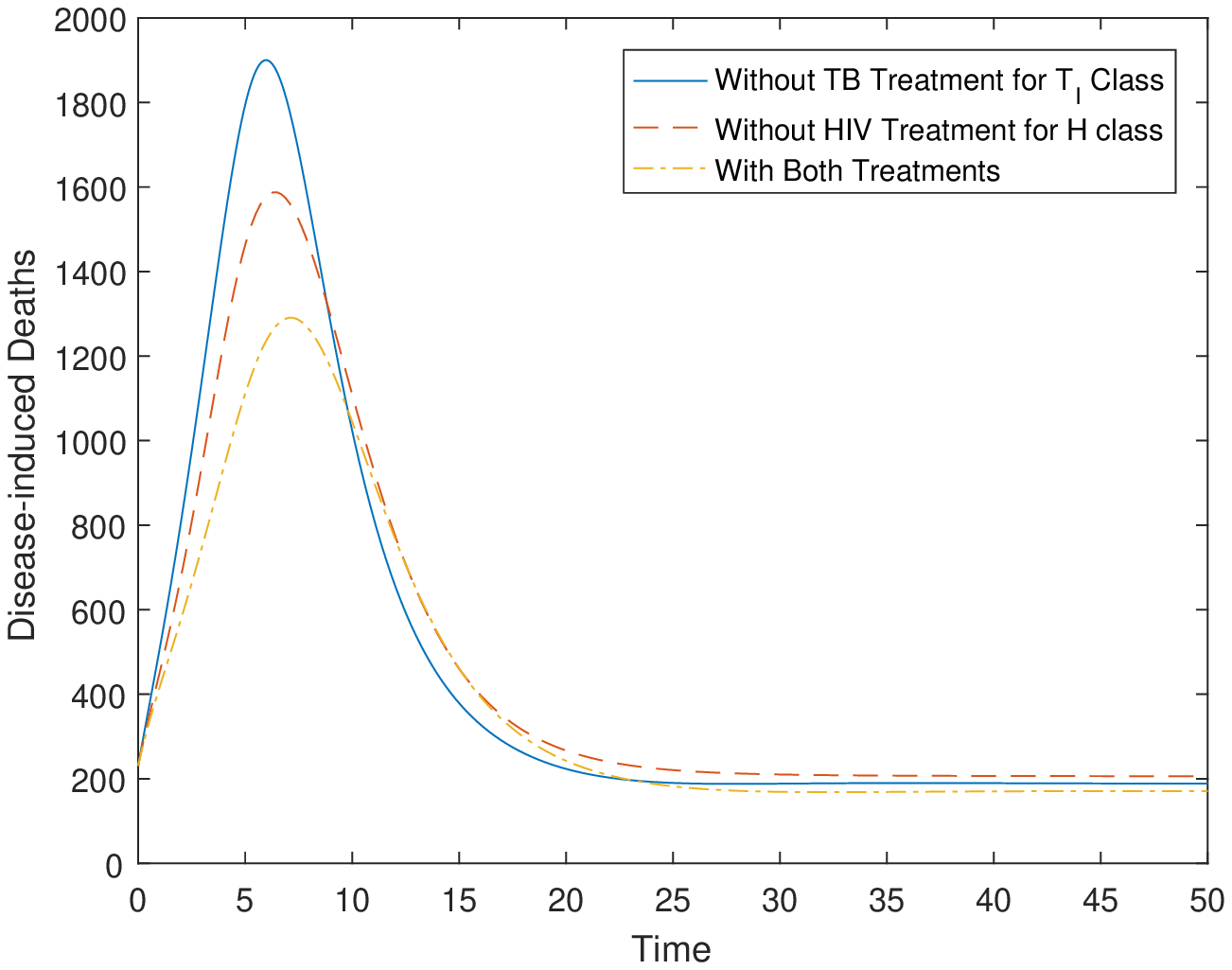}}
\hspace{0mm}
\end{center}
\caption{Effect of TB only and HIV only treatments individually on the disease-induced deaths when $\mathcal{R}_0^T>1$ and $\mathcal{R}_0^H>1$.}\label{fig5}
\end{figure}

Figure \ref{fig3} shows the effect of reproduction number with early or late initiation of ART during TB treatment on disease-induced deaths. The disease-induced deaths are the deaths among the population caused by the diseases other than natural deaths. We have plotted the time versus disease-induced deaths in compartments $C_1^T$ and $C_2^T$ which is $(d_H+\gamma d_H^T)C_1^T+(d_H+\alpha d_H^T)C_2^T$. We have plotted the graphs for four different set of values $\eta_1=0$, $\eta_2=0$; $\eta_1=0.03$, $\eta_2=0.02$ ; $\eta_1=0.04$, $\eta_2=0.02$ and $\eta_1=0.02$, $\eta_2=0.04$. Figure \ref{fig3} shows that higher the rate of early phase HIV treatment during TB treatment, lesser are the disease-induced deaths. Figure shows that for $\mathcal{R}_0<1$ disease-induced deaths decrease with time and vanish after some time. For $\mathcal{R}_0^T<1$ and $\mathcal{R}_0^H>1$, the disease-induced deaths decrease to vanish after attaining a maximum value. For $\mathcal{R}_0^T>1$ and $\mathcal{R}_0^H<1$ the disease-induced deaths increase after a slight decrease and then keep on decreasing. For $\mathcal{R}_0>1$, the disease-induced deaths increase to a maximum value and then decrease to attain a constant value. Figure \ref{fig3} shows that higher the rate of HIV early treatment during TB treatment, lesser are the disease-induced deaths, while increased rate of HIV late treatment does not have very remarkable impact on disease-induced deaths. 

The total number of disease induced deaths in different cases are given in Table \ref{table:3}. Table \ref{table:3} shows that for different reproduction numbers the total number of disease-induced deaths are lesser for higher value of $\eta_1$ while $\eta_2$ does not affect it in similar manner.  

\begin{table}[H]
\caption{Total disease-induced deaths in span of 50 years for different values of $\eta_1$ and $\eta_2$}
\centering
\scalebox{0.85}{
\begin{tabular}{l c  c c c}
\hline \hline
Reproduction number & \makecell{Deaths when \\$\eta_1=0,\; \eta_2=0$} & \makecell{Deaths when \\ $\eta_1=0.03,\; \eta_2=0.02$ } & \makecell{Deaths when \\ $\eta_1=0.04,\; \eta_2=0.02$ } & \makecell{Deaths when \\ $\eta_1=0.02,\; \eta_2=0.04$} \\ [0.5ex]
\hline 
$\mathcal{R}_0<1$ & 228 & 203 & 196 &  204\\
$\mathcal{R}_0^T<1$ and $\mathcal{R}_0^H>1$ & 1096 & 990 & 954 &  1023\\
$\mathcal{R}_0^T>1$ and $\mathcal{R}_0^H<1$ & 507 & 490 & 465 &  511\\
$\mathcal{R}_0^T>1$ and $\mathcal{R}_0^H>1$ & 3678 & 3350 & 3228  &  3473\\
\hline
\end{tabular}}
\label{table:3}
\end{table}

Figure \ref{fig4} shows the effect of treatment for single disease infection, that is, effect of $r$, $r_2$ and $r_1$ on susceptible and the total infected population ($T_L+T_I+H+H_L+C+C_1+C_2+C_1^T+C_2^T$). We plot the graph of the time versus susceptible and infected population for different reproduction numbers. For population without treatment, we assume $r = r_1 = r_2 = 0$, that is, there is no treatment for TB for $T_I$ class, so no recovery by treatment i.e. $r_1=0$ and there is no treatment for $H$ compartment. Figure \ref{fig4} (A) shows that for basic reproduction number $\mathcal{R}_0<1$, the infection dies out with time when treatment is considered while for no single disease treatment infected population does not vanish even in a span of 50 years. The susceptible tend to increase with treatments while for the other case susceptible decrease to attain a constant value which is very small. Figure \ref{fig4} (B) shows that when $\mathcal{R}_0^T<1$ and $\mathcal{R}_0^H>1$, for no treatment of single disease infected population, the infected population first increases rapidly and then decreases to become constant at a higher value than with the treatment. The susceptible population undergoing treatment for single disease infection as well as co-infection treatment is always greater than the susceptible population with only co-infection treatment and with passage of time treatment increases the susceptible population while with no treatment, susceptible population decreases to a very small value. In our discussion henceforth, the term treatment signifies the treatments for single disease infected population. Figure \ref{fig4} (C) shows that for $\mathcal{R}_0^T>1$ and $\mathcal{R}_0^H<1$, infected population first increases in both cases and then decreases, but for no treatment infection increases rapidly than for the treatment and then decreases to attain a constant value less than the value attained by the curve with treatment. The susceptible in both cases decrease to attain a constant value but susceptible population without treatment decrease very rapidly to attain an extremely low value as compared to susceptible with treatment. Figure \ref{fig4} (D) shows that for $\mathcal{R}_0^T>1$ and $\mathcal{R}_0^H>1$, infected population without treatment increases more rapidly and then decreases to attain a constant value a little larger than for the population with treatment. The susceptible population decreases rapidly with time so that population without treatment is always less than the population with treatment. 

We conclude that in absence of treatment for single disease infection, the disease-induced deaths increase and infection persists even when $\mathcal{R}_0<1$, disease-induced deaths increase independent of the reproduction number and the susceptible population decreases to a very small quantity.

Figure \ref{fig5} shows the effect of single disease treatments individually on the disease-induced deaths when diseases are epidemic where disease-induced deaths are the deaths which are caused due to diseases. It represents the disease-induced deaths in all the three cases. It can be clearly seen that the  disease-induced deaths are maximum when there is no treatment for TB only infected individuals while it is lesser in the case of no treatment for HIV infected individuals followed by the case when both the treatments are given. This figure shows that TB only treatment is most important to reduce the infection and disease-induced deaths. While the statistics are best when both the treatments are given to the population. Moreover, this figure also shows that if diseases are epidemic then treatment for one disease only, that is, treatment for  TB only or HIV only infected individuals is not sufficient for the eradication of diseases from the population.

\section{Conclusion}\label{Section7}
The main model \eqref{eq:main} is a 12 dimensional system which focuses on need of single disease treatment in addition to co-infection treatment. The TB only and HIV only models have globally stable disease free equilibria when their corresponding reproduction number is less than unity. For reproduction number greater than unity, the endemic equilibria also exist and are locally asymptotically stable. The full HIV-TB co-infection model is shown to have a locally asymptotically stable disease free equilibrium when $\mathcal{R}_0<1$. The HIV only and TB only equilibrium exist and are locally asymptotically stable when $\mathcal{R}_0^H>1$, $\mathcal{R}_0^T<1$ and $\mathcal{R}_0^H<1$, $\mathcal{R}_0^T>1$, respectively. Numerical simulations indicate the presence of interior equilibrium for $\mathcal{R}_0^H>1$, $\mathcal{R}_0^T>1$. The system undergoes supercritical transcritical bifurcation when $R_0^T=1$ and $R_0^H=1$ whereas $\beta^*=\beta e$ and $\lambda^*=\lambda \sigma$ act as the bifurcation parameters respectively.

The simulation results provided many interesting insights into the effect of the dynamics of HIV-TB co-infection. Figure \ref{fig2} shows that the presence of TB may have a significant influence on HIV dynamics. For endemic TB, prevalence of HIV increases. When HIV is endemic that is $\mathcal{R}_0^H>1$ then even for $\mathcal{R}_0^T<1$, the co-infected population increases dramatically. Figure \ref{fig3} shows that early initiation of ART during TB treatment is more effective to reduce disease-induced deaths while late initiation doesn't have very remarkable impact on it. Figure\ref{fig4} shows that co-infection treatment alone is not sufficient to eradicate the diseases, treatment for TB only and HIV only patients separately is also necessary. In the absence of that disease-induced deaths become very high and infection prevails even when reproduction number is less than unity. Numerical results show that investing more in single disease infection treatments is more effective to reduce the infection and disease-induced deaths. Figure \ref{fig5} emphasises the role of TB only treatment in reducing the infection in population and decreasing the disease-induced deaths when both the diseases are epidemic. Investing more in the TB treatment programs can be a better approach to control the disease dynamics as it can be completely cured and its duration is short. Thus, declining the HIV prevalence. Moreover, TB is more contagious than HIV and a single infection can cause many secondary infections. Hence, controlling TB infection can be an important aspect in controlling the co-infection dynamics but treatment for one disease only is not sufficient for the complete eradication of the diseases from the population. 
 
\section*{Acknowledgments} The authors would like to thank the anonymous referees for their extensive comments on the revision of the manuscript which really improved the quality of the paper. The author, Shikha Jain is financially supported by the University Grant Commission (UGC), Government of India (Sr. No. 2061440971). She gratefully acknowledges the support for the research work.

\end{document}